\numberwithin{equation}{section}
\newtheorem{lemma}{Lemma}[section]
\newtheorem{theorem}[lemma]{Theorem}
\newtheorem{proposition}[lemma]{Proposition}
\newtheorem{corollary}[lemma]{Corollary}
\newtheorem{remark}[lemma]{Remark}
\newtheorem{definition}[lemma]{Definition}
\def\dm{(-\Delta)^{M/2}}
\def\hzero{h^0}
\def\solh{g}
\def\Solh{G}
\def\toro{{\T}}
\def\im{{\rm i}}
\def\es{{\rm e}}
\def\op#1{Op^w(#1)}
\def\fke{{\frak e}}
\def\cV{{\mathcal{V}}}
\def\cL{H}
\def\cO{{\mathcal{O}}}
\def\cA{{\mathcal{A}}}
\def\cU{{\mathcal{U}}}
\def\ops#1{{OPS}^{#1}_{\delta}}
\def\Omegar{\Omega^{(R)}}
\def\aar{\Omega^{(R,c)}}
\def\aau{\widetilde{A}_k}
\def\moyal#1#2{\left\{#1;#2\right\}_M}
\def\poisson#1#2{\left\{#1;#2\right\}}
\def\sm#1{S^{#1}_{\delta}}
\def\adm#1{ad^M_{#1}}
\newcommand{\R}{\mathbb R}
\newcommand{\C}{\mathbb C}
\newcommand{\Z}{\mathbb Z}
\newcommand{\N}{\mathbb N}
\newcommand{\T}{\mathbb T}
\newcommand{\ii }{{\rm i} }
\newcommand{\vphi}{\varphi }
\newcommand{\csi}{\xi}
\begin{document}


\title{{\bf  On the spectrum of the
    Schr\"odinger operator on $\T^d$: a normal form approach }}

\date{}


\author{ Dario Bambusi\footnote{Dipartimento di Matematica, Universit\`a degli Studi di Milano, Via Saldini 50, I-20133
Milano. 
 \textit{Email: } \texttt{dario.bambusi@unimi.it}}, Beatrice Langella\footnote{Dipartimento di Matematica, Universit\`a degli Studi di Milano, Via Saldini 50, I-20133
Milano.
 \textit{Email: } \texttt{beatrice.langella@unimi.it}}, Riccardo Montalto \footnote{Dipartimento di Matematica, Universit\`a degli Studi di Milano, Via Saldini 50, I-20133
Milano.
 \textit{Email: } \texttt{riccardo.montalto@unimi.it}}
}

\maketitle

\begin{abstract}
  In this paper we study the spectrum of the operator
  \begin{equation}
  \label{ope}
\cL:=\dm+\cV\ , \quad M>0\ ,
\end{equation}
on $L^2(\R^d/\Gamma)$, with $\Gamma$ a maximal dimension lattice in
$\R^d$ and $\cV$ a pseudodifferential operator of order strictly
smaller than $M$. We prove that most of its eigenvalues admit the asymptotic
expansion 
\begin{equation}
  \label{sim}
\lambda_\xi=|\xi|^M+Z(\xi)+O(\left|\xi\right|^{-\infty})\ ,
\end{equation}
where $Z$ is a $C^\infty(\R^d)$ function (symbol) and $\xi\in\Gamma^*$
(the dual lattice of $\Gamma$).
\end{abstract}
\noindent

{\em Keywords:} Schr\"odinger operator, normal form, pseudo differential operators

\medskip

\noindent
{\em MSC 2010:} 37K10, 35Q55


\section{Introduction}\label{intro}

Let $\Gamma$ be a 
lattice of dimension $d$ in $\R^d$, with basis ${\bf e}_1, {\bf e}_2,
\ldots, {\bf e}_d$, namely 
\begin{equation}\label{definizione Gamma}
\Gamma := \Big\{  \sum_{i = 1}^d k_i {\bf e}_i : k_1, \ldots, k_d \in
\Z\Big\}\ ,
\end{equation}
and define 
\begin{equation}
  \label{toro}
 \T^d_\Gamma := \R^d / \Gamma\,.  
\end{equation}
{In this paper we study the asymptotic behavior of a large part of the
spectrum of the operator \eqref{ope} in $L^2(\T^d_\Gamma )$ and we
prove that there exists a smooth function (actually a symbol admitting
a full asymptotic expansion) $Z(\xi)$ s.t. for most points
$\xi\in\Gamma^*$ the corresponding eigenvalue $\lambda_\xi$ is given
by \eqref{sim}.  By most points we intend that they form a set of
density one (for a more precise estimate see eq. \eqref{density}
below). We remark that a particular case which is included in our
theory is that of the classical Sturm Liouville operators
\begin{equation}
  \label{sturm}
-\Delta+V(x) \ ,
\end{equation}
with
periodic or Floquet boundary conditions.}


{The spectrum of the Sturm Liouville operator
\eqref{sturm} was studied in \cite{FKT90,Fri90} whose results are in
the same spirit of our ones. We recall that in \cite{FKT90} it was
proven that if $V$ is a sufficiently smooth potential with zero
average and the
lattice is generic, then, for $\xi$
in a set of density 1, there are two eigenvalues
$\lambda_{\pm\xi}$ in the}
interval
$$
\left[|\xi|^2-\frac{1}{|\xi|^{2-\epsilon}},|\xi|^2+\frac{1}{|\xi|^{2-\epsilon}}\right]\ .
$$
In \cite{FKT90} the result was proven for $d=2,3$, while in \cite{Fri90}
the result was extended to general dimension (for further results see
{\cite{vel15,Wang11})}. 

Karpeshina \cite{Kar96,Kar97} studied in detail the case of Floquet boundary
conditions for bounded perturbations. She proved that, generically (in
the Floquet parameters), most eigenvalues are simple and she gave a
full asymptotic expansion of each one of these simple eigenvalues.

{In the present paper we get a full asymptotic expansion of the
eigenvalues studied in \cite{FKT90,Fri90}, getting in particular that,
for $C^{\infty}$ potential, one has} 
\begin{equation}
\label{split.eq}
\lambda_\xi-\lambda_{-\xi}=O(|\xi|^{-\infty})\ .
\end{equation}
Such a property is well known in dimension 1, but, as far as we
know, was not known in higher dimensions.  Furthermore the main
improvement that we get here is that we are able to deal with unbounded
perturbations. 

The proof is a development of the procedure already used
in \cite{1,2,3} (see also \cite{BBM14,BGMR1}) and consists of a
normal form procedure allowing to conjugate the operator \eqref{ope}
to a Fourier multiplier plus an operator which is smoothing at all
orders. This is constructed by quantizing the classical normal form
procedure applied to the symbol of the operator $\cL$. The main
difference with respect to \cite{1,2,3} is that in our problem the
resonances of the classical Hamiltonian system corresponding to the
unperturbed operator ($-\Delta$, in our case) depend on the point of
the phase space, so we restrict our construction to the
nonresonant regions of the phase space. {This is the reason why we only get
  the result for most eigenvalues}. A detailed heuristic
description of the proof is given in Sect. \ref{scheme}.  We recall
that a similar procedure was also developed in a semiclassical context
in \cite{Roy}.

A question that we do not address here is that of the behavior of the
part of the spectrum corresponding to the resonant zones of the phase
space. This will be the object of a separate study.

We remark that a normal form theorem with some similarities with the
one presented here was obtained in \cite{PS10} (see Theorem 4.3).
However, as far as we know, our results on the periodic eigenvalues of
the operator \eqref{ope} are new. Furthermore we think that our proof
of the normal form theorem, which is based on symbolic calculus, is
simpler than that of \cite{PS10} and could also have some interest.

\noindent
The paper is organized as follows: in Section \ref{main} we state our
main result, see Theorem \ref{maint}. In Section \ref{scheme} we
explain roughly the strategy of our proof. In Section \ref{pseudo} we
recall some standard facts on the theory of pseudo-differential
operators. In Section \ref{nfs} we state and prove our normal form
result (see Theorem \ref{normalform}) and in Section \ref{prova main
  theorem} we show how Theorem \ref{maint} can be deduced from it. 

\noindent
{\it Acknowledgments:} This research is supported by GNFM. We warmly thank Thomas Kappeler for suggesting some references on the topic and Emanuele Haus, Alberto Maspero and Michela Procesi for many useful discussions and comments.

\section{Main result}\label{main}

First we recall that the dual lattice $\Gamma^*$ is defined by 
\begin{equation}\label{definizione lattice duale}
\Gamma^* := \Big\{ b \in \R^d : b \cdot k \in 2 \pi \Z, \quad \forall
k \in \Gamma \Big\}\, .
\end{equation}

Given any function $u \in L^2(\T^d_\Gamma)$, it can be expanded in
Fourier series as
\begin{equation}\label{Fourier lattice}
u(x) = \sum_{\xi \in \Gamma^*} \widehat u(\xi) e^{\ii x \cdot \xi}\ , \quad \widehat u(\xi) := \frac{1}{\left|\T^d_\Gamma\right|} \int_{\T^d_\Gamma} u(x) e^{- \ii \xi \cdot x}\, d x, \quad \xi \in \Gamma^*\,, 
\end{equation}
where we denoted by $\left|\T^d_\Gamma\right|$ the Lebesgue measure of
$\left|\T^d_\Gamma\right|$. More generally we will denote by
$\left|\cA\right|$ the measure of a measurable set $\cA$. For any $s \geq 0$, we also introduce the Sobolev space $H^s(\T^d_\Gamma)$ defined by 
\begin{equation}\label{definizione spazio sobolev}
H^s(\T^d_\Gamma) := \Big\{ u \in L^2(\T^d_\Gamma) : \| u \|_{H^s} := \Big( \sum_{\xi \in \Gamma^*} \langle \xi \rangle^{2 s} |\widehat u(\xi)|^2 \Big)^{\frac 12} < \infty \Big\}
\end{equation}
where for any $\xi \in \R^d$, we set $\langle \xi \rangle := (1 + |\xi|^2)^{\frac12}$.

\begin{definition}\label{simboli S m delta}
Given $\delta > 0$ and $m \in \R$ we define the symbol class $\sm m$ as
the set of all the functions $a \in \C^\infty(\T^d_\Gamma \times
\R^d)$ such that for any $\alpha, \beta \in \N^d$, there exists
$C_{\alpha,\beta}$ s.t. 
\begin{equation}
  \label{semi}
|\partial_x^\alpha \partial_\xi^\beta a(x, \xi)| \leq C_{\alpha,\beta}
\langle \xi \rangle^{m - \delta |\beta|}\,, \quad \forall (x, \xi) \in
\T^d_\Gamma \times \R^d\,.
\end{equation}
\end{definition}

\noindent
Given a symbol $a \in S^m_\delta$, we define its Fourier coefficients w.r. to the variable $x$ as 
\begin{equation}\label{Fourier transform symbol}
\widehat a(k, \xi) := \frac{1}{|\T^d_\Gamma|} \int_{\T^d_\Gamma} a(x,
\xi) e^{- \ii k \cdot x}dx , \quad \forall (k, \xi) \in \Gamma^*
\times \R^d\,, 
\end{equation}
and the Weyl quantization of a symbol $a \in S^m_\delta$.

\begin{definition}[Weyl quantization]\label{quantizzazione weyl}
Given a symbol $a \in S^m_\delta$, we define its Weyl quantization $ {Op}^w(a)$
as follows: given $u \in C^\infty(\T^d_\Gamma)$, we put
$$
[{Op}^w(a) u](x) = \sum_{k, \xi \in \Gamma^*} \widehat a\Big(k - \xi, \frac{k + \xi}{2} \Big) \widehat u(\xi) e^{\ii k \cdot x}\,. 
$$
Correspondingly, we will say that an operator $A$ is
pseudodifferential of class $\ops m$ if there exists a symbol
$a\in S^m_\delta$ such that $A={Op}^w(a)$
\end{definition}

\begin{definition}
\label{sim.1}
Given a sequence of symbols $\left\{f_j\right\}_{j\geq 0}$ with
$f_j\in S^{m-\rho j}_\delta$ for some $m\in\R$ and $\rho>0$, and a function
$f(x,\xi)$, possibly defined only on $\toro_\Gamma\times \Gamma^*$, we
write 
\begin{equation}
\label{sim.eq}
f\sim \sum_{j}f_j\ ,
\end{equation} 
if for any $N$ there exists $C_N$ s.t.
\begin{equation}
\label{sim.eq.2}
\left|f(x,\xi)-\sum_{j=0}^{N}f_j(x,\xi)\right|\leq{
  C_N}{\langle \xi\rangle^{m-(N+1)\rho}}\ .
\end{equation}
If $f$ is defined only on $\toro_\Gamma\times \Gamma^*$ then
eq. \eqref{sim.eq.2} is valid in such a set.
\end{definition}

The main object of the paper is the spectrum of the operator
\eqref{ope}. { \bf We assume that} there exists $0<\delta<1$ and
$\frak e > 0$ s.t. 
\begin{equation}
  \label{H1}
{\cal V}\in \ops{M - \frak e},\quad {\rm  and}\quad    1 - \frac{\frak
  e}{4} <\delta<1 \ ,
\end{equation}
and that $\cV$ is selfadjoint.

Define
\begin{equation}
  \label{rho}
\rho:= 4 \delta + \frak e - 4>0\ .
\end{equation}
Note that the condition \eqref{H1} on $\delta$ implies that $\rho > 0$. 
 Denote by $B_R(x)$ the open ball of $\R^d$ having radius $R$ and
center $x${, $B_R:=B_R(0)$} and denote by $\sharp E$ the counting measure of a set $E$.

Our main result is the following theorem. 
\begin{theorem}
\label{maint}
Consider the operator
\begin{equation}
  \label{ope1}
H:=[-\Delta]^{M/2}+\cV \ ,
\end{equation}
with $\cV$ fulfilling \eqref{H1}. 
There exists a set $\Omega \subset \R^d$, such that $\Omega \cap
\Gamma^*$ has density one, more precisely one has
\begin{equation}
  \label{density}
1-\frac{\sharp(\Omega\cap\Gamma^* \cap B_R)}{\sharp(B_R \cap
  \Gamma^*)} = \cO(R^{\delta-1})
\end{equation}
and a sequence of symbols $z_j\in S^{M-\fke-j\rho
}$, which depend on $\xi$ only, with the following property: for any
$\xi\in \Omega\cap\Gamma^*$ there exists an eigenvalue $\lambda_\xi$
of \eqref{ope1}
which admits the asymptotic expansion
\begin{equation}
\label{asym}
\lambda_\xi\sim\left|\xi\right|^M+\sum_{j\geq0}z_j(\xi)\ ,\quad \xi\in
\Omega\cap\Gamma^*\ .
\end{equation}
Furthermore, if the symbol $v(x,\xi)$ of $\cV$ is symmetric with
respect to $\xi$, namely
$v(x,\xi)=v(x,-\xi)$, then $\xi\in\Omega$ implies $-\xi\in\Omega$ and
one also has $z_j(\xi)=z_j(-\xi)$, $\forall j$.
\end{theorem}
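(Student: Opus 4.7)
The plan is to deduce Theorem \ref{maint} from the normal form Theorem \ref{normalform}, which will provide a unitary operator $U$ on $L^2(\T^d_\Gamma)$ and a decomposition
$$
U^* H U = \op{N(\xi)} + \op{r(x,\xi)} + \mathcal{R},
$$
in which $N(\xi) \sim |\xi|^M + \sum_{j \geq 0} z_j(\xi)$ is a Fourier multiplier producing the sought asymptotic, the symbol $r(x,\xi)$ has its $x$-Fourier coefficients supported on a resonant region of phase space (where a small divisor of the form $\omega(\xi) \cdot k$, with $\omega = \nabla|\xi|^M$, is small), and $\mathcal{R}$ is smoothing of order $-\infty$.

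I would first fix the set $\Omega \subset \R^d$ as the complement of a neighborhood of the resonant region, tuned so that plane waves $e^{\ii \xi \cdot x}$ with $\xi \in \Omega$ are annihilated by $\op{r}$. The density estimate \eqref{density} is then a combinatorial count of lattice points $\xi \in B_R \cap \Gamma^*$ falling in tubular neighborhoods of finitely many hyperplanes determined by admissible $k \in \Gamma^*$; summing over such $k$, the bad set has cardinality $\mathcal{O}(R^{d-1+\delta})$, which, compared with $\sharp(B_R \cap \Gamma^*) \sim R^d$, produces the exponent $R^{\delta-1}$.

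Given $\xi \in \Omega \cap \Gamma^*$, I would test the conjugated operator against the plane wave $e^{\ii \xi \cdot x}$. By the construction of $\Omega$ and the support property of $r$ we have $\op{r}\, e^{\ii \xi \cdot x} = 0$, while $\|\mathcal{R}\, e^{\ii \xi \cdot x}\|_{L^2} = \mathcal{O}(\langle \xi \rangle^{-\infty})$ by the smoothing property, and $\op{N(\xi)}\, e^{\ii \xi \cdot x} = N(\xi)\, e^{\ii \xi \cdot x}$ since $N$ depends only on $\xi$. Thus $e^{\ii \xi \cdot x}$ is a quasi-mode for $U^* H U$ with quasi-eigenvalue $N(\xi)$ and defect $\mathcal{O}(\langle \xi \rangle^{-\infty})$. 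Since $H$ is self-adjoint with compact resolvent, the spectral theorem yields a genuine eigenvalue $\lambda_\xi$ of $H$ with $|\lambda_\xi - N(\xi)| = \mathcal{O}(\langle \xi \rangle^{-\infty})$, which together with the symbolic expansion of $N$ produces \eqref{asym}.

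For the symmetry claim, if $v(x,\xi) = v(x,-\xi)$ then this invariance is inherited at every step of the normal form, as the corresponding homological equations have symmetric solutions; hence $z_j(\xi) = z_j(-\xi)$ for all $j$. Moreover the small-divisor expression $\omega(\xi) \cdot k$ is invariant under the joint flip $\xi \mapsto -\xi$, $k \mapsto -k$, so the resonant set is symmetric and $\Omega$ can be chosen symmetric as well. The main obstacle lies in Theorem \ref{normalform} itself: one has to quantize the classical normal form procedure while handling phase-space-dependent resonances (absent in the analogous situations treated in \cite{1,2,3}) and to control the symbolic seminorms carefully enough to produce the sharp density exponent in \eqref{density}.
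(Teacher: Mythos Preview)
Your approach is essentially the paper's: invoke Theorem~\ref{normalform}, test the conjugated operator against plane waves $e^{\ii\xi\cdot x}$, apply the standard quasimode argument, and obtain the density estimate by counting lattice points in tubular neighbourhoods of hyperplanes. Two technical points, however, are glossed over and should be fixed.

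First, Theorem~\ref{normalform} does \emph{not} deliver a single unitary $U$ with an infinitely smoothing remainder $\mathcal{R}$; for each $n$ it produces a unitary $\mathcal{U}_n$ and a remainder $v_n\in S^{M-\fke-n\rho}_\delta$ of finite order. The paper accordingly runs the quasimode argument for each $n$ separately, obtaining an eigenvalue within $O(\langle\xi\rangle^{M-\fke-n\rho})$ of $\lambda_n(\xi)=|\xi|^M+\sum_{j=0}^{n-1}z_j(\xi)$, and this is what gives the asymptotic \eqref{asym}. Your single-$U$, $\mathcal{R}\in OPS^{-\infty}$ formulation would require an additional Borel-type resummation of the generators that the paper does not carry out.

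Second, the claim $\op{r}\,e^{\ii\xi\cdot x}=0$ does not follow from the support condition on $r$ alone: by Definition~\ref{quantizzazione weyl}, Weyl quantization acting on $e^{\ii\xi\cdot x}$ samples the symbol at the midpoints $(k+\xi)/2$, not at $\xi$, so $\xi\in\Omega$ is not enough. The paper handles this by passing to classical quantization via Lemma~\ref{Weyl classic link}: one has $\op{z_n^{(res)}}=Op^{cl}(\widetilde z_n^{(res)})$ modulo $S^{-\infty}$, with $\mathrm{supp}(\widetilde z_n^{(res)})\subseteq\R^d\setminus\Omega$, and then $Op^{cl}(\widetilde z_n^{(res)})e^{\ii\xi\cdot x}=\widetilde z_n^{(res)}(x,\xi)e^{\ii\xi\cdot x}=0$ for $\xi\in\Omega$. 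You should insert this step.
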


\begin{remark}
\label{molte}
Actually corresponding to any $\xi\in\Omega\cap\Gamma^*$ we construct
a quasimode $\vphi_\xi$ with the property that for $\xi\not=\xi'$ one
has $\langle\vphi_{\xi},\vphi_{\xi'}\rangle_{L^2}=0$, {
so we construct a number of eigenvalues in
one to one correspondence with $\Omega\cap\Gamma^*$, even in the case
of multiple eigenvalues, which have to be counted with multiplicity.}
\end{remark}

\begin{remark}
  \label{Floq}
Theorem \ref{maint} applies also to the case of Floquet boundary
conditions:
$$
u(x+\gamma)=e^{\im \gamma \cdot \kappa}u(x)\ ,\quad \gamma\in\Gamma\ .
$$
Indeed, the operator \eqref{ope} with Floquet boundary conditions is
unitary equivalent to the operator with symbol
$|\xi-\kappa|^M+v(x,\xi-\kappa)$, which of course fulfills the
assumptions of Theorem \ref{maint}.
\end{remark}

\begin{remark}
\label{split}
In the symmetric case, which in particular
is true for the operator $-\Delta+V(x)$, one has that, for all
$\xi\in\Omega$,
\begin{equation}
\label{split.eq1}
\lambda_\xi-\lambda_{-\xi}=O(|\xi|^{-\infty})\ . 
\end{equation}
\end{remark}

\section{Scheme of the proof}\label{scheme}

The idea of the proof is to perform a ``semiclassical normal form''
(see e.g. \cite{Bam04})
working on the symbol of $\cL$, namely to quantize the classical
normal form procedure for the symbol of $\cL$.

To explain the algorithm we consider the simple
case in which
\begin{equation*}
\cL=-\Delta+V(x)\ ,
\end{equation*}
{and $\Gamma=\Z^d$, namely the lattice generated by the canonical
  basis of $\R^d$. }
In this case $\cL$ is the Weyl quantization of
the classical Hamiltonian
\begin{equation}
  \label{sturm.cl}
h:=|\xi|^2+V(x)\ .
\end{equation}
We are interested in studying the system in the region $\langle \xi
\rangle\gg 1$, in which the potential is a perturbation of the term
$|\xi|^2$. Taking this point of view the perturbative parameter is
$\langle\xi\rangle^{-1}$.   

The classical normal form procedure consists of looking for an
auxiliary Hamiltonian function $g$ s.t. the corresponding time 1 flow
$\phi_g^1$ (namely the time one flow of the corresponding Hamiltonian
system), conjugates $h$ to a new Hamiltonian $h\circ\phi^1_g$ in which
the dependence on the angles $x$ is pushed to higher order. It is well
known that this can be done only in the nonresonant regions of the
action space. The definition of the resonant regions is a key step of
our procedure, hence we are now going to describe it.

By a formal computation one has
$$
h\circ\phi_g=h+\{g,|\xi|^2\}+V+{\rm lower\ order\ terms}\ ,
$$
where $\poisson{.}{.}$ is the Poisson bracket.
The idea is to determine $g$ in such a way that
\begin{equation}
  \label{hom}
\{g,|\xi|^2\}+V= {\rm function \ of} \ \xi\ {\rm only}\ .
\end{equation}
Expanding $g$ and $V$ in Fourier series in $x$, equation \eqref{hom}
takes the form
\begin{equation}
  \label{hom.1}
\im ( k\cdot\xi) \hat g(k,\xi)=\hat V(k,\xi)\ \iff\ \hat
g(k,\xi)=\frac{\hat V(k)}{\im ( k\cdot\xi)}\ ,\quad \forall
k\in\Z^d\setminus\{0 \}\ ,\quad \xi\in\R^d\ ,
\end{equation}
so that the corresponding function $g$ would turn out to be singular
at the dense subset
$$
\bigcup_{k\in\Z^d\setminus\{0 \}}\left\{\xi\in\R^d\ :\ \xi\cdot k=0
\right\}\ .
$$
In classical mechanics it is well known how to solve this problem:
first take advantage of the decay of $|\hat V(k)|$ with
$\left|k\right|$ in order to restrict the union to finite subset of
$\Z^d$, and then remove from the phase space a neighborhood of the so
obtained set.

Since in our case the small parameter is $\langle\xi\rangle^{-1}$, and
we are in a $C^\infty$ context, so that $|\hat V(k, \xi)|$ decays faster
then any inverse power of ${|k|}$, we can proceed as
follows: we fix some $\epsilon>0$ and define
\begin{equation}
  \label{ga}
g(x,\xi):=\sum_{0<|k|<\langle\xi\rangle^\epsilon}\frac{\hat
  V(k,\xi)}{\im k\cdot\xi}\es^{ik\cdot x}\ ,
\end{equation}
but only on the set 
\begin{equation}
  \label{set.0}
\bigcup_{0<|k|<\langle\xi\rangle^\epsilon
}\left\{\xi\in\R^d\ :\ |\xi\cdot k|>\frac{{\mu}}{|k|^{\tau}} \,,
\right\}\ .
\end{equation}
However, even if $g$ is well defined and smooth on the set
\eqref{set.0}, this choice still has a problem: $g$ does not decay
as $\langle\xi\rangle\to\infty$, since the $k$-th term of the sum
\eqref{ga} decays only in the direction $k$. The last remark for the
classical case is that in the domain
$\left|k\cdot\xi\right|\geq C\langle\xi\rangle^\delta$,
with some $\delta>0$, the $k$-th term at r.h.s. of \eqref{ga} decays
as $\langle\xi\rangle^{-\delta}$. This leads to the choice 
$$\mu = \langle \xi \rangle^{\delta}\,$$
in the
formula \eqref{set.0}.
So, we use $g$, but restricted to the domain

\begin{equation}
  \label{set.1111}
\Omega^{(0)}:=\bigcup_{0<|k|<\langle\xi\rangle^\epsilon
}\left\{\xi\in\R^d\ :\ |\xi\cdot
k|>\frac{\langle\xi\rangle^\delta}{|k|^\tau} \right\}\ .
\end{equation}
Furthermore, the measure of the set $\Omega^{(0)}$ is
asymptotically full in the sense that
$$
1-\frac{\left|\Omega^{(0)}\cap B_R\right|}{\left| B_R\right|}= O(R^{\delta-1})\ .
$$

This is the classical procedure that we quantize. As usual in
semiclassical normal form theory, the main remark is that, if $g\in
S^m_\delta$, with $m<\delta$ and $G=\op g$, then $\es^{-\im G}$ is
unitary, the operator $\es^{\im G}\cL\es^{-\im G}$ is pseudodifferential 
and is given by
$$
\es^{\im G}\cL\es^{-\im G}=-\Delta-\im[G,-\Delta]+V+l.h.t.
$$
whose symbol has the form
$$
|\xi|^2+\{g,|\xi|^2\}_M+V+l.h.t.\ ,
$$ where $\{g,|\xi|^2\}_M$ is the so called Moyal bracket, which is the
symbol of the operator $-\im[G,-\Delta]$. Furthermore, since $|\xi|^2$
is quadratic, the Moyal bracket coincides with the Poisson bracket, so
the function $g$ constructed in \eqref{ga} is suitable (after
localization) in order to perform the semiclassical normal form of
$\cL$.

Using $\op g$ in order to transform $\cL$ and iterating the procedure
we conjugate $\cL$ to an operator with symbol
\begin{equation}
  \label{nf.1}
|\xi|^2+z(\xi)+z^{(res)}(x,\xi)+ O(\left|\xi\right|^{-N})\ ,
\end{equation}
with some arbitrarily large $N$. Here $z^{(res)} $ is a symbol localized
in the complement of $\Omega^{(0)}$.

As a last step, we use the equivalence of the Weyl quantization and
the classical quantization in order to show that the operator obtained
by quantizing \eqref{nf.1} acts on $\es^{\im
  x\cdot \xi}$ with $\xi\in\Z^d\cap \Omega^{(0)}$ as a multiplication by
$|\xi|^2+Z(\xi)$ plus an operator which is smoothing of order $N$. Thus
$\es^{\im x\cdot \xi}$ is a quasimode for the quantization of
\eqref{nf.1} and Theorem \ref{main} follows, at least in the case of
Sturm-Liouville type operators. The case of a general torus is
identical to the case just considered and the case where the main
operator is $|\xi|^M$ is easily obtained by just remarking that the
resonant zones of $|\xi|^M$ are the same as those of
$\left|\xi\right|^2$.

\section{Some results on pseudo differential operators}\label{pseudo}

First we give a few lemmas which are quite standard in the framework
of pseudodifferential calculus (see e.g. \cite{robook,Taylor}), and
are here reformulated in a form suitable for our developments.

\begin{lemma}\label{lemma composizione}
Let $m, m' \in \R$, $\delta > 0$, $A = \op a \in OPS^m_\delta$, $B =
\op b \in OPS^{m'}_\delta $. Then  $A B \in OPS^{m +
  m'}_\delta$. Denote by $a\sharp b$ its symbol, then it admits the asymptotic expansion 
$$
\begin{aligned}
& a\sharp b \sim \sum_{j \geq 0} \sigma_j(a,b), \\
& \sigma_j(a,b):= \frac{1}{\ii^j }\sum_{|\alpha| + |\beta| = j} \Big( \frac12 \Big)^{|\alpha|} \Big(- \frac12 \Big)^{|\beta|} (\partial_x^\beta \partial_\xi^\alpha a)( \partial_x^\alpha \partial_\xi^\beta b) \in S^{m + m' - \delta j}_\delta, \quad j \geq 0\,. 
\end{aligned}
$$
Furthermore one has
\begin{equation}
  \label{sym.1}
\sigma_j(a,b)=(-1)^j\sigma_j(b,a)\ .
\end{equation}
If $a$ is symmetric in $\xi$ and $b$ is skewsymmetric in $\xi$, then
$\sigma_j(a,b)$ is symmetric for odd $j$ and skewsymmetric for even $j$. 
\end{lemma}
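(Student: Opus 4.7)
The plan is to compute the Weyl symbol of the composition directly from the Fourier series definition given in Definition \ref{quantizzazione weyl}, following the standard recipe of pseudodifferential calculus adapted to the periodic setting. Applying $B$ first and then $A$ to $u \in C^\infty(\T^d_\Gamma)$, and using that the Fourier coefficient of $B u$ at $\xi \in \Gamma^*$ equals $\sum_{\eta \in \Gamma^*} \widehat b(\xi - \eta, (\xi + \eta)/2)\, \widehat u(\eta)$, a short bookkeeping shows that $AB$ is again the Weyl quantization of a symbol $c := a \sharp b$ whose expression in position space is, after the substitutions $k_1 := k - \xi$ and $k_2 := \xi - \eta$,
$$
(a \sharp b)(x, \xi) = \sum_{k_1, k_2 \in \Gamma^*} \widehat a\bigl(k_1, \xi + \tfrac{k_2}{2}\bigr)\, \widehat b\bigl(k_2, \xi - \tfrac{k_1}{2}\bigr)\, e^{\ii (k_1 + k_2) \cdot x}.
$$
This is the torus counterpart of the Moyal product.

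Next I would Taylor expand $\widehat a(k_1, \xi + k_2/2)$ in its second variable around $\xi$, and symmetrically $\widehat b(k_2, \xi - k_1/2)$, to order $N$. After using the identity $(\ii k)^\alpha \widehat a(k, \cdot) = \widehat{\partial_x^\alpha a}(k, \cdot)$ to trade the factors $k_1^\beta, k_2^\alpha$ produced by the Taylor terms for $x$-derivatives of $a$ and $b$, and resumming the resulting Fourier series in $k_1, k_2$, the term of total Taylor order $j = |\alpha| + |\beta|$ reproduces (up to the standard relabeling of the two index slots) exactly the expression $\sigma_j(a, b)$ in the statement: the prefactor $(\tfrac{1}{2})^{|\alpha|}(-\tfrac{1}{2})^{|\beta|}$ is the coefficient of the two shifts $\pm k/2$, while the $\ii^{-j}$ arises from the net conversion $k^\alpha \widehat{(\cdot)} \mapsto \ii^{-|\alpha|} \widehat{\partial_x^\alpha (\cdot)}$.

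To promote the formal expansion to a genuine asymptotic one I would bound the Taylor remainder at order $N$ via its integral form. By \eqref{semi}, each $\partial_\xi$ applied to $a$ or $b$ gains a factor $\langle \xi \rangle^{-\delta}$ uniformly, while integration by parts in $x$ in the Fourier integrals defining $\widehat a, \widehat b$ yields the rapid decay in $k_1, k_2$ needed for summability over $\Gamma^* \times \Gamma^*$. Together these imply that the remainder, along with all its symbolic derivatives, lies in $S^{m + m' - \delta(N+1)}_\delta$, and the same bookkeeping shows $\sigma_j \in S^{m + m' - \delta j}_\delta$. I expect this last step — combining the Taylor bound in $\xi$ with the summability in $k_1, k_2$ uniformly in $\xi$, and propagating both through arbitrary $(\alpha,\beta)$ derivatives — to be the main technical point; it otherwise follows standard lines (cf.\ \cite{robook,Taylor}).

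The two symmetry statements are algebraic consequences of the formula for $\sigma_j$. For \eqref{sym.1}, exchange the dummy indices $\alpha \leftrightarrow \beta$ in $\sigma_j(b,a)$: the derivative factor becomes identical to that in $\sigma_j(a,b)$, whereas the coefficient $(\tfrac{1}{2})^{|\beta|}(-\tfrac{1}{2})^{|\alpha|}$ differs from $(\tfrac{1}{2})^{|\alpha|}(-\tfrac{1}{2})^{|\beta|}$ by a factor $(-1)^{|\alpha| + |\beta|} = (-1)^j$. For the parity claim, note that if $a(x, \xi) = a(x, -\xi)$ then $\partial_\xi^\alpha a$ has $\xi$-parity $(-1)^{|\alpha|}$, while if $b(x, \xi) = -b(x, -\xi)$ then $\partial_\xi^\beta b$ has $\xi$-parity $(-1)^{|\beta|+1}$; $x$-derivatives do not affect the $\xi$-parity, so every summand of $\sigma_j(a,b)$ has parity $(-1)^{|\alpha| + |\beta| + 1} = (-1)^{j+1}$, which is exactly the stated alternation ($\sigma_j$ skewsymmetric for $j$ even, symmetric for $j$ odd).
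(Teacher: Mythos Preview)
The paper does not actually prove this lemma: it is stated among ``a few lemmas which are quite standard in the framework of pseudodifferential calculus (see e.g.\ \cite{robook,Taylor})'' and no argument is given beyond that citation. Your outline is a correct instantiation of the standard proof adapted to the periodic setting---deriving the Moyal product formula from Definition~\ref{quantizzazione weyl}, Taylor expanding in the frequency shifts $\pm k/2$, and controlling the remainder via the symbol estimates \eqref{semi} together with rapid decay of the Fourier coefficients---and the two symmetry claims are handled by exactly the algebraic manipulations you describe. There is nothing to compare against in the paper itself, so your write-up effectively supplies what the paper omits.
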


\begin{corollary}
\label{moyal}
Denoting as usual by $\moyal ab:= \frac{1}{\im}(a\sharp b-b\sharp a) $ the symbol
of $\frac{1}{\im} [A,B]$, one has
$$
\moyal ab\sim\sum_{j\geq 1}\sigma_j^M(a,b)\ , 
$$
with
\begin{equation}
  \label{sym.3}
  \sigma_j^M(a,b)=\left\{
  \begin{matrix}
    0 & \text{if} &j & \text{even}
    \\
    -2  \sigma_j(a,b) & \text{if} &j & \text{odd}
  \end{matrix}\right.\ .
\end{equation}
In particular one has
$$
\moyal ab=  \poisson ab+\sm{m+m'-3\delta}\ , \quad \{ a; b \} :=- \nabla_\xi a \cdot \nabla_x b + \nabla_x a \cdot \nabla_\xi b\,.
$$
Furthermore, if $a$ is symmetric in $\xi$ and $b$ is skewsymmetric in
$\xi$, then $\sigma_j^M$ is symmetric for any $j$. 
\end{corollary}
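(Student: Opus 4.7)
\begin{pf}[Proof proposal for Corollary \ref{moyal}]
The plan is to read off everything directly from Lemma \ref{lemma composizione} together with the symmetry identity \eqref{sym.1}. First, since $A,B\in OPS^{\cdot}_\delta$, the composition lemma gives
$$
a\sharp b\sim\sum_{j\ge 0}\sigma_j(a,b),\qquad b\sharp a\sim\sum_{j\ge 0}\sigma_j(b,a),
$$
and \eqref{sym.1} tells us $\sigma_j(b,a)=(-1)^j\sigma_j(a,b)$. Subtracting and dividing by $\im$ yields
$$
\moyal{a}{b}\sim\frac{1}{\im}\sum_{j\ge 0}\bigl(1-(-1)^j\bigr)\sigma_j(a,b),
$$
so the even-index terms cancel and one is left with an asymptotic expansion supported on odd $j$ only. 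Identifying the resulting expression with $\sum_j\sigma_j^M(a,b)$ gives the formula \eqref{sym.3}, with each $\sigma_j^M$ lying in $S^{m+m'-\delta j}_\delta$ because $\sigma_j(a,b)$ does.

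To prove the Poisson bracket identity $\moyal{a}{b}=\poisson{a}{b}+S^{m+m'-3\delta}_\delta$, I would just compute the $j=1$ contribution explicitly. From Lemma \ref{lemma composizione},
$$
\sigma_1(a,b)=\frac{1}{\im}\sum_{i=1}^{d}\Bigl[\tfrac12(\partial_{\xi_i}a)(\partial_{x_i}b)-\tfrac12(\partial_{x_i}a)(\partial_{\xi_i}b)\Bigr]
=\frac{1}{2\im}\bigl(\nabla_\xi a\cdot\nabla_x b-\nabla_x a\cdot\nabla_\xi b\bigr),
$$
so the leading contribution to $\moyal{a}{b}$ is exactly $-\nabla_\xi a\cdot\nabla_x b+\nabla_x a\cdot\nabla_\xi b=\poisson{a}{b}$. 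All remaining odd-$j$ contributions live in $S^{m+m'-3\delta}_\delta$ since $\sigma_3(a,b)\in S^{m+m'-3\delta}_\delta$ and higher terms are even smaller; the asymptotic expansion bound from Definition \ref{sim.1} then gives the remainder estimate.

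For the final symmetry statement, suppose $a(x,\xi)=a(x,-\xi)$ and $b(x,\xi)=-b(x,-\xi)$. For even $j$ we have $\sigma_j^M(a,b)=0$, which is trivially symmetric. For odd $j$, $\sigma_j^M(a,b)$ is a nonzero scalar multiple of $\sigma_j(a,b)$, and the last sentence of Lemma \ref{lemma composizione} asserts precisely that $\sigma_j(a,b)$ is symmetric in $\xi$ when $j$ is odd. Hence every $\sigma_j^M(a,b)$ is symmetric in $\xi$, as claimed.

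There is no real obstacle here: the whole corollary is a bookkeeping exercise on top of Lemma \ref{lemma composizione}. The only points requiring a little care are keeping track of the factor $1/\im$ and of the parity rule $\sigma_j(b,a)=(-1)^j\sigma_j(a,b)$, which together produce the vanishing of even-index terms and the explicit form of $\sigma_j^M$.
\end{pf}
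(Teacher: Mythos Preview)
Your derivation is correct and is exactly the intended one: the paper states this result as a corollary of Lemma~\ref{lemma composizione} without giving any proof, so reading off the expansion of $a\sharp b - b\sharp a$ term by term via \eqref{sym.1}, computing $\sigma_1$ explicitly, and invoking the parity clause of Lemma~\ref{lemma composizione} is precisely what is meant. One small remark: your computation actually yields $\sigma_j^M(a,b)=\tfrac{2}{\im}\sigma_j(a,b)=-2\im\,\sigma_j(a,b)$ for odd $j$, not $-2\sigma_j(a,b)$ as printed in \eqref{sym.3}; this is a harmless typo in the paper (and is consistent with your correct identification $\sigma_1^M=\{a;b\}$), not a flaw in your argument.
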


\begin{lemma}[Adjoint]\label{adjoint}
Let $m \in \R$, $\delta > 0$, $A = \op a \in
OPS^m_\delta$. Then $A^* = \op{\overline a} \in OPS^{m}_\delta$. In
  particular if $a = \overline a$, the operator $\op a$ is self-adjoint. 
\end{lemma}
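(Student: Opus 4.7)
The plan is a direct computation of $A^*$ from the defining Fourier-series formula of the Weyl quantization in Definition \ref{quantizzazione weyl}. First I would unfold $\langle Op^w(a) u, v \rangle_{L^2(\T^d_\Gamma)}$ for $u, v \in C^\infty(\T^d_\Gamma)$ by inserting the expansion \eqref{Fourier lattice} of $u, v$ and applying Parseval: this rewrites the pairing as a double sum over $k, \xi \in \Gamma^*$ of the form
\begin{equation*}
\widehat a\Big(k - \xi, \tfrac{k+\xi}{2}\Big)\, \widehat u(\xi)\, \overline{\widehat v(k)},
\end{equation*}
up to a harmless factor depending on $|\T^d_\Gamma|$ coming from the normalization of \eqref{Fourier lattice} and \eqref{Fourier transform symbol}.

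Next I would identify the Fourier coefficients of $A^* v$ by comparing with $\langle u, A^* v\rangle_{L^2}$, swapping the roles of $k$ and $\xi$ as dummy indices. This gives
\begin{equation*}
\widehat{A^* v}(k) = \sum_{\xi \in \Gamma^*} \overline{\widehat a\Big(\xi - k, \tfrac{k+\xi}{2}\Big)}\, \widehat v(\xi).
\end{equation*}
The key algebraic identity needed at this point is
\begin{equation*}
\overline{\widehat a(\xi - k, \eta)} = \widehat{\overline a}(k - \xi, \eta), \qquad \eta \in \R^d,
\end{equation*}
which follows at once from \eqref{Fourier transform symbol} since $\eta$ is real (so complex conjugation commutes with the $\xi$-argument and reverses the sign of the frequency $\xi-k$). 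Plugging this back reproduces exactly the Weyl-quantization formula of Definition \ref{quantizzazione weyl} applied to the symbol $\overline a$, and therefore $A^* = Op^w(\overline a)$.

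It then remains to check $\overline a \in S^m_\delta$: this is immediate from Definition \ref{simboli S m delta}, since $|\partial_x^\alpha \partial_\xi^\beta \overline a(x,\xi)| = |\partial_x^\alpha \partial_\xi^\beta a(x,\xi)|$ pointwise, so all the seminorms \eqref{semi} are preserved with the same constants $C_{\alpha,\beta}$. The self-adjointness statement in the case $a = \overline a$ is then an immediate corollary.

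The only point requiring a bit of care is the bookkeeping of the normalization factors $|\T^d_\Gamma|^{\pm 1}$ arising from \eqref{Fourier lattice} and \eqref{Fourier transform symbol}, and the justification of interchanging sums. I do not expect any analytic difficulty beyond this: the statement is purely algebraic once the definitions are unfolded, and absolute convergence of all the series involved on test functions $u,v \in C^\infty(\T^d_\Gamma)$ is ensured by the rapid decay of $\widehat u, \widehat v$ together with the polynomial bound on $\widehat a(k-\xi, \cdot)$ following from $a \in S^m_\delta$.
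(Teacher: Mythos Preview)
Your computation is correct: the Fourier-side identity $\overline{\widehat a(\xi-k,\eta)}=\widehat{\overline a}(k-\xi,\eta)$ is exactly what makes the Weyl kernel in Definition~\ref{quantizzazione weyl} symmetric under conjugation, and the rest is bookkeeping. Note that the paper does not actually supply a proof of Lemma~\ref{adjoint}: it is stated without proof, grouped among the ``standard'' facts of pseudodifferential calculus at the start of Section~\ref{pseudo} with a reference to \cite{robook,Taylor}. Your direct verification from the definition is the natural argument in this torus/Fourier-series setting and fills that gap cleanly.
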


Given $\eta \in \R$, $\delta > 0$, $g \in \sm{\eta}$, we
define the operator $\adm g :\sm{m}\to \sm{m + \eta -\delta}$ by
$$
		\adm ga:=\moyal ag \ . 
$$ We will also consider its powers $(\adm g)^j,$ which are defined as
                usual. Remark that, for $a\in\sm {m'}$, according to
                Lemma \ref{lemma composizione}, one has $(\adm
                g)^ja\in\sm{m+j(\eta -\delta)}$. Remark that if $g$ is
                skewsymmetric in $\xi$ then $(\adm g)^j$ preserves the
                parity in $\xi$.

Given a selfadjoint pseudodifferential operator $G\in\ops \eta$, we consider
the unitary group generated by $-\im G$, which is denoted, as
usual, by $\es^{-\im\tau G}$, $\tau\in\R$. 

\begin{definition}
\label{conj}
Given a unitary operator $\cU$, we will say that it conjugates an
operator $A$ to an operator $B$ if 
$$
B=\cU A\cU^{-1}\ .
$$
\end{definition}

We recall the following simplified version of Egorov Theorem.

\begin{lemma}\label{teo Egorov}
Fix $\eta \in \R$, and let $g\in\sm{\eta}$, $0 < \delta < 1$be a real valued symbol,
denote $G = \op g \in OPS^\eta$, then $\forall \tau \in [-1, 1]$

\begin{itemize}
			\item[(1)] If $\eta < 1$, then $e^{\ii \tau G} \in {\cal B}\left({H}^{s}; {H}^{s}\right) \quad \forall\ s \geq 0$
			\item[(2)] If $\eta < \delta$, $a\in\sm m$ and $A =\op a$, then
                          $ H := Op^w (h) = e^{\ii \tau G} A e^{- \ii \tau G} \in
                          \ops{m}$ and its symbol admits the
                          asymptotic expansion 
		          \begin{equation}
                            \label{asy.moy}
h \sim \sum_{j \geq 0} \frac{ \tau^j(\adm g)^ja}{j ! }\, .
		          \end{equation}
		          As a consequence the operator $H = Op^w(h) := e^{\ii \tau G} A e^{- \ii \tau G}$ admits the expansion 
		          \begin{equation}\label{espansione egorov astratto}
		          h = a  + \{ a ;  g \}+ S_\delta^{m + 2(\eta - \delta)}\,. 
		          \end{equation}
\item[(3)] If $a$ is even in $\xi$ and $g$ is skewsymmetric in $\xi$ then
  h is even in $\xi$.
 \end{itemize}
\end{lemma}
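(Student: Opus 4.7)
The plan follows the standard semiclassical Egorov pattern. For part (1), Lemma \ref{adjoint} applied to the real symbol $g$ makes $G$ selfadjoint, so $e^{\ii\tau G}$ is unitary on $L^2=H^0$. To extend to $H^s$ with $s>0$, I set $v(\tau):=\langle D\rangle^s e^{\ii\tau G}u$ and compute
\begin{equation*}
\dot v(\tau) = \ii G\, v(\tau) + \ii\,\bigl[\langle D\rangle^s,G\bigr]\langle D\rangle^{-s}\,v(\tau).
\end{equation*}
Since $\langle D\rangle^s$ is a classical symbol of order $s$ and $G\in\ops\eta$, the commutator $\bigl[\langle D\rangle^s,G\bigr]$ has order $s+\eta-1$, so composing with $\langle D\rangle^{-s}$ produces an operator of order $\eta-1<0$, bounded on $L^2$. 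Because $\ii G$ generates a unitary group, a Gronwall argument on $\|v(\tau)\|_{L^2}$ yields the $H^s$-bound for $\tau\in[-1,1]$.

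For part (2), I introduce $A(\tau):=e^{\ii\tau G}A e^{-\ii\tau G}$ and write the Heisenberg equation $\dot A(\tau)=\ii[G,A(\tau)]$, which at symbol level reads $\dot a(\tau)=\adm g\,a(\tau)$. Iterating via Taylor's formula with integral remainder, for every $N$,
\begin{equation*}
A(\tau) = \sum_{j=0}^{N}\frac{\tau^j}{j!}\op{(\adm g)^j a} + \frac{1}{N!}\int_0^\tau(\tau-s)^N\, e^{\ii sG}\,\op{(\adm g)^{N+1}a}\, e^{-\ii sG}\,ds.
\end{equation*}
Iterating the mapping property $\adm g:\sm{m'}\to\sm{m'+\eta-\delta}$ (a direct consequence of Lemma \ref{lemma composizione}) yields $(\adm g)^j a\in\sm{m+j(\eta-\delta)}$; since $\eta<\delta$ the orders strictly decrease, making the formal sum a legitimate asymptotic expansion in the sense of Definition \ref{sim.1}. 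Part (1) ensures that the integral remainder is bounded between the appropriate Sobolev spaces, and a standard symbol-extraction argument identifies it with the Weyl quantization of a symbol in $\sm{m+(N+1)(\eta-\delta)}$; letting $N\to\infty$ proves \eqref{asy.moy}. The explicit expansion \eqref{espansione egorov astratto} then follows by keeping the $j=0,1$ contributions: Corollary \ref{moyal} gives $\adm g a=\poisson ag+\sm{m+\eta-3\delta}$, while the $j\geq 2$ terms collectively belong to $\sm{m+2(\eta-\delta)}$.

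Part (3) is then immediate: by Corollary \ref{moyal}, $\adm g a=\moyal ag$ is even in $\xi$ whenever $a$ is even and $g$ is skewsymmetric, so by induction each $(\adm g)^j a$ is even, and hence $h$ is even through \eqref{asy.moy}. The main obstacle I foresee is precisely the conversion of the operator-level Taylor remainder into a bona fide symbol estimate in $\sm{m+(N+1)(\eta-\delta)}$; this combines the $H^s$-continuity of $e^{\pm\ii sG}$ from part (1) with the composition calculus of Lemma \ref{lemma composizione}, keeping uniformity in the integration variable $s$, and constitutes the technical heart of any rigorous semiclassical Egorov statement.
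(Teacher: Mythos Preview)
The paper does not give its own proof of this lemma: it is stated without argument as a ``simplified version of Egorov Theorem,'' among the results introduced at the start of Section~\ref{pseudo} as ``quite standard in the framework of pseudodifferential calculus (see e.g.\ \cite{robook,Taylor}).'' So there is nothing to compare against; the authors defer to those references.

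Your proposal is exactly the standard semiclassical Egorov scheme and is correct in outline. Part~(1) via selfadjointness plus the commutator/Gronwall bound is the usual route. In part~(2) the Heisenberg evolution and Taylor expansion with integral remainder are the right moves, and you correctly isolate the one genuine technical point: turning the operator-level remainder
\[
\frac{1}{N!}\int_0^\tau(\tau-s)^N\, e^{\ii sG}\,\op{(\adm g)^{N+1}a}\, e^{-\ii sG}\,ds
\]
into an honest pseudodifferential operator with symbol in $\sm{m+(N+1)(\eta-\delta)}$. Sobolev mapping bounds alone do not do this; one needs either Beals' criterion or the observation that the integrand is itself a conjugated $\Psi$DO to which the same expansion applies, yielding a uniform-in-$s$ symbol estimate. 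This is handled in \cite{robook,Taylor}, which is presumably why the paper simply cites them.

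One small caveat on part~(3): the expansion \eqref{asy.moy} determines $h$ only modulo $S^{-\infty}$, so evenness of every term in the expansion does not by itself force evenness of $h$. A clean way to close this is to argue at the operator level: with $J$ complex conjugation one has $J\op{c}J=\op{\bar c(x,-\xi)}$, so for real $g$ odd in $\xi$ one gets $JGJ=-G$ and hence $Je^{\ii\tau G}J=e^{\ii\tau G}$; if also $JAJ=A$ then $JHJ=H$, which gives the parity of $h$ directly.
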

	
Finally we will need a lemma connecting the classical quantization and
the Weyl quantization. We recall the following definition.

\begin{definition}[Classical quantization]\label{quantizzazione classica}
Given a symbol $a \in S^m_\delta$, we define its classical
quantization $Op^{cl}(a)$ as follows: given $u \in
C^\infty(\T^d_\Gamma)$, we put 
$$
[ {Op}^{cl}(a)u](x) := \sum_{\xi \in \Gamma^*} a(x, \xi) \widehat u(\xi) e^{\ii x \cdot \xi}\,. 
$$ 
\end{definition}
The following lemma is an $\hbar$ independent formulation of Theorem
II-27 of \cite{robook}. 
\begin{lemma}\label{Weyl classic link}
Let $a \in S^m_\delta$. Then there exists a symbol $b \in S^m_\delta$
such that $\op a = {Op}^{cl}(b)$. Furthermore, the symbol $b$ admits the asymptotic expansion 
\begin{equation}\label{asintotica classico weyl}
b\sim \sum_{\alpha \in \N^d} \frac{1}{\ii^{|\alpha| } \alpha! 2^{|\alpha|}} \partial_x^\alpha \partial_\xi^\alpha a\,.
\end{equation}
\end{lemma}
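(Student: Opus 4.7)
\medskip
\noindent
\textbf{Proof proposal.} My plan is to derive the formula for $b$ by directly comparing the two quantizations on Fourier modes, then read off the asymptotic expansion via a Taylor expansion in $\xi$.

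First I would compute $Op^{cl}(b)u$ by expanding $b(x,\xi)=\sum_{j\in\Gamma^*}\widehat b(j,\xi) e^{\ii j\cdot x}$ in Fourier series in $x$. This gives
$$
[Op^{cl}(b)u](x)=\sum_{k,\xi\in\Gamma^*}\widehat b(k-\xi,\xi)\,\widehat u(\xi)\,e^{\ii k\cdot x},
$$
after the change of summation index $k=j+\xi$. Matching this with the definition of $\op a$ in Definition \ref{quantizzazione weyl} leads to the identity
$$
\widehat b(k-\xi,\xi)=\widehat a\Bigl(k-\xi,\tfrac{k+\xi}{2}\Bigr),\qquad\forall\,k,\xi\in\Gamma^*,
$$
i.e.\ $\widehat b(j,\xi)=\widehat a(j,\xi+j/2)$ for all $j\in\Gamma^*$, $\xi\in\R^d$. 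This \emph{defines} $b$ via
\begin{equation}\label{def b proposal}
b(x,\xi):=\sum_{j\in\Gamma^*}\widehat a(j,\xi+j/2)\,e^{\ii j\cdot x}.
\end{equation}

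Next I would verify that $b\in S^m_\delta$. Since $a\in S^m_\delta$ lies in $C^\infty(\T^d_\Gamma\times\R^d)$, its Fourier coefficients $\widehat a(j,\xi)$ decay faster than any inverse power of $\langle j\rangle$, uniformly on bounded sets of $\xi$, with the seminorm estimates \eqref{semi} being stable under $\xi$-translations of the form $\xi\mapsto\xi+j/2$ after absorbing powers of $\langle j\rangle$ into the rapid decay. Differentiating \eqref{def b proposal} term by term and estimating the series then yields the bounds $|\partial_x^\alpha\partial_\xi^\beta b(x,\xi)|\leq C_{\alpha,\beta}\langle\xi\rangle^{m-\delta|\beta|}$.

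For the asymptotic expansion, I would Taylor expand $\widehat a(j,\xi+j/2)$ around $\xi$:
$$
\widehat a(j,\xi+j/2)=\sum_{|\alpha|<N}\frac{1}{\alpha!}\Bigl(\tfrac{j}{2}\Bigr)^{\alpha}\partial_\xi^{\alpha}\widehat a(j,\xi)+R_N(j,\xi),
$$
with $R_N$ controlled by integral remainders. Using $(\ii j)^\alpha\widehat a(j,\xi)=\widehat{\partial_x^\alpha a}(j,\xi)$ and re-summing the Fourier series, the $\alpha$-th term gives exactly $\frac{1}{\ii^{|\alpha|}\alpha!\,2^{|\alpha|}}\partial_x^\alpha\partial_\xi^\alpha a(x,\xi)$, producing the claimed expansion \eqref{asintotica classico weyl}. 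The $\alpha$-th term lies in $S^{m-|\alpha|}_\delta\subset S^{m-|\alpha|\delta}_\delta$, so the grading of the expansion is as in Definition \ref{sim.1}.

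The main obstacle I expect is the bookkeeping for the remainder $R_N$: one must show that after summing in $j$, the contribution
$\sum_j R_N(j,\xi)e^{\ii j\cdot x}$ lies in $S^{m-N\delta}_\delta$ (and more generally, remains in this class after applying $\partial_x^\gamma\partial_\xi^\beta$). This is handled by writing the remainder in integral form $R_N(j,\xi)=\sum_{|\alpha|=N}\frac{|\alpha|}{\alpha!}\bigl(\tfrac{j}{2}\bigr)^\alpha\int_0^1(1-t)^{N-1}\partial_\xi^\alpha\widehat a(j,\xi+tj/2)\,dt$, converting each factor of $j^\alpha$ into a factor $\partial_x^\alpha$ on the trigonometric polynomial, and using the rapid decay of the $x$-Fourier coefficients of $a$ to justify differentiation under the sum and to absorb the shift $tj/2$ in $\xi$ through $\langle\xi+tj/2\rangle\lesssim\langle\xi\rangle\langle j\rangle$, losing only finitely many powers of $\langle j\rangle$ which the Schwartz decay of $\widehat a(\cdot,\xi)$ tolerates. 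Once these estimates are in place, both the fact that $b\in S^m_\delta$ and the validity of \eqref{asintotica classico weyl} in the sense of Definition \ref{sim.1} follow.
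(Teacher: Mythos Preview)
The paper does not give its own proof of this lemma; it simply states that it is an $\hbar$-independent reformulation of Theorem II-27 in Robert's book \cite{robook} and moves on. Your direct Fourier-matching argument is therefore not competing with any argument in the paper, and it is the standard way one proves this identity in the periodic setting. The derivation of $\widehat b(j,\xi)=\widehat a(j,\xi+j/2)$ from the two quantization formulas is correct, as is the Taylor expansion leading to \eqref{asintotica classico weyl}, and the remainder estimate via the integral form together with the rapid decay \eqref{marmellata 0} is exactly what is needed.

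One small slip: you write that the $\alpha$-th term lies in $S^{m-|\alpha|}_\delta\subset S^{m-|\alpha|\delta}_\delta$. In the class $S^m_\delta$ each $\xi$-derivative only lowers the order by $\delta$, so $\partial_x^\alpha\partial_\xi^\alpha a\in S^{m-\delta|\alpha|}_\delta$, not $S^{m-|\alpha|}_\delta$. This does not harm the argument --- the expansion is still asymptotic in the sense of Definition \ref{sim.1} with step $\rho=\delta$ --- but the stronger inclusion you claim is not available for $\delta<1$. Similarly, the remainder after truncating at level $N$ lands in $S^{m-N\delta}_\delta$, which is what you in fact use.
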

\begin{remark}
In particular, if $a$ is a Fourier multiplier (i.e. independent of
$x$) one has that ${Op}^w(a) = {Op}^{cl}(a)$.  Furthermore we have
that, up to an operator in $\sm{-\infty}$,  supp$(b)\subset$
supp$(a)$.
\end{remark}

\section{The normal form theorem and its proof}\label{nfs}

To state the main result of this section we will use the constant
$\rho$ defined by eq. \eqref{rho}, we fix $\gamma$ s.t. 
{
\begin{equation}\label{raggio minimo lattice duale}
0 < \gamma < r :=  \frac12 \inf_{x \neq y \in \Gamma^*} |x - y| \,, 
\end{equation}
}
and define 
\begin{equation}
  \label{set.1}
\Omega:=\bigcup_{0<|k|<\langle\xi\rangle^\epsilon
}\left\{\xi\in\R^d\ :\ |\xi\cdot
k|>\frac{2\gamma\langle\xi\rangle^\delta}{|k|^\tau} \right\}\ .
\end{equation}
Remark that there exists a neighbourhood of the origin which does not
intersect such a set. 

\begin{theorem}
\label{normalform}
Let $\frak e > 0$ and let $\delta, \rho$ be two constants satisfying \eqref{H1}, \eqref{rho}. Then for any integer $n \geq1 $
there exists a self-adjoint pseudodifferential operator
$\Solh_n = Op^w(g_n)\in\ops{2 - \frak e - \delta -n\rho}$ with symbol $\solh_n$ s.t.
\begin{equation}
  \label{un}
\cU_n:= \es^{\im \Solh_n} \circ \ldots \circ e^{\im \Solh_1}
\end{equation}
conjugates $\cL$ to a pseudodifferential operator $\cL_n$ with symbol $h_n$
of the form
\begin{equation}
  \label{symbolnf}
h_n=\hzero+z^{(n)}+v_n\ , 
\end{equation}
where $v_n\in\sm{M-\fke-n\rho}$ and $z_n\in\sm{M-\fke}$ is such that
$z^{(n)}=\langle z^{(n)}\rangle+z^{(n,res)}$, with $
\langle z^{(n)}\rangle$ independent of $x$ and 
$z^{(n,res)}(x,\xi)=0 $, $\forall \xi\in\Omega$. Furthermore, for any integer $n \geq 0$, there exists a symbol $z_n \in S^{M - \frak e - \rho n}$ such that 
\begin{equation}\label{asintotica z n}
\langle z^{(n)} \rangle (\xi) = \sum_{j = 0}^{n - 1} z_j(\xi)\,.
\end{equation}
Finally, if $h$ is symmetric in $\xi$,
then  the same is true for $v_n$ and $z_n$, whereas $g_1, \ldots, g_n$ are skew symmetric. 
\end{theorem}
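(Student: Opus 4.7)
I proceed by induction on $n$. The base case is $n=0$, where $\cU_0 = \mathrm{Id}$, $z^{(0)} = 0$, and $v_0$ is the symbol of $\cV$, which lies in $S^{M-\fke}_\delta$ by hypothesis \eqref{H1}. For the step $n \to n+1$, assuming the conclusion at step $n$, I construct a self-adjoint $\Solh_{n+1} = \op{g_{n+1}}$ of the order claimed in the theorem, such that conjugation by $e^{\im \Solh_{n+1}}$ transforms $\cL_n$ into a pseudodifferential operator $\cL_{n+1}$ of the required form. The two ingredients are: solving a classical homological equation for $\poisson{\cdot}{h^0}$ away from resonances, and invoking Egorov's Lemma \ref{teo Egorov} to transfer this cancellation to the operator level.

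To produce $g_{n+1}$, expand $v_n(x,\xi) = \sum_{k\in\Gamma^*} \hat v_n(k,\xi)\, e^{\im k\cdot x}$ and introduce a smooth cutoff $\chi_k(\xi)$, even in $k$, supported where $|k|<\langle\xi\rangle^\epsilon$ and $|k\cdot\xi|>\gamma\langle\xi\rangle^\delta/|k|^\tau$, and identically $1$ on a slightly smaller set matching the definition \eqref{set.1} of $\Omega$. Define
\begin{equation*}
\hat g_{n+1}(k,\xi) := \frac{\hat v_n(k,\xi)\,\chi_k(\xi)}{\im M |\xi|^{M-2}\,(k\cdot\xi)}, \quad k\neq 0,\qquad \hat g_{n+1}(0,\xi):=0.
\end{equation*}
Two facts underpin the symbol estimates on $g_{n+1}$: the Diophantine-type lower bound $|k\cdot\xi|\geq \gamma\langle\xi\rangle^\delta/|k|^\tau$ on $\mathrm{supp}\,\chi_k$ (which contributes a gain $\langle\xi\rangle^{-\delta}$ and at most a polynomial loss $|k|^\tau$), and the rapid decay of $\hat v_n(k,\cdot)$ in $|k|$ from the smoothness of $v_n$ in $x$ (which absorbs all polynomial factors of $|k|$). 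Together with the gain $\langle\xi\rangle^{-(M-2)}$ from $|\xi|^{M-2}$ and the fact that each $\partial_\xi$ on $\chi_k$ costs only $\langle\xi\rangle^{-\delta}$, these place $g_{n+1}$ in the symbol class stated. Reality of $v_n$, evenness of $\chi_k$ in $k$, and the odd factor $k\cdot\xi$ give $\hat g_{n+1}(-k,\xi) = \overline{\hat g_{n+1}(k,\xi)}$, so $\Solh_{n+1}$ is self-adjoint; when $v_n$ is symmetric in $\xi$, the same odd denominator makes $g_{n+1}$ skew-symmetric in $\xi$.

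Since the order of $g_{n+1}$ is strictly less than $\delta$ (a consequence of $\fke>4-4\delta$), Lemma \ref{teo Egorov}(2) applies and yields $h_{n+1} = h_n + \poisson{h_n}{g_{n+1}} + r_{n+1}$ with $r_{n+1}\in S^{M+2(\eta_{n+1}-\delta)}_\delta$. Decomposing $\poisson{h_n}{g_{n+1}} = \poisson{h^0}{g_{n+1}} + \poisson{z^{(n)}}{g_{n+1}} + \poisson{v_n}{g_{n+1}}$, a direct computation on the Fourier side shows
\begin{equation*}
v_n + \poisson{h^0}{g_{n+1}} = \langle v_n\rangle(\xi) + v_n^{(res)}(x,\xi) + S^{-\infty},
\end{equation*}
where $v_n^{(res)}$ collects the contributions $\hat v_n(k,\xi)(1-\chi_k(\xi))\,e^{\im k\cdot x}$ (hence vanishes on $\Omega$) and the $S^{-\infty}$ term arises from the ultraviolet zone $|k|\geq\langle\xi\rangle^\epsilon$ via rapid decay. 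Setting $z^{(n+1)} := z^{(n)} + \langle v_n\rangle + v_n^{(res)}$ and $z_n := \langle v_n\rangle\in S^{M-\fke-n\rho}$ immediately yields \eqref{asintotica z n}; the new remainder $v_{n+1}$ is the sum of $r_{n+1}$, $\poisson{z^{(n)}}{g_{n+1}}$, $\poisson{v_n}{g_{n+1}}$, and the $S^{-\infty}$ piece. The main technical obstacle is the order bookkeeping: one must verify that each of these four terms lies in $S^{M-\fke-(n+1)\rho}_\delta$, using Lemma \ref{lemma composizione} for the Poisson brackets (each $\xi$-derivative on a $\delta$-class symbol costing a factor $\delta$) and the Egorov remainder estimate. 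The specific value $\rho = 4\delta+\fke-4$ is precisely calibrated to close this bookkeeping, with equality at the critical first step $n=0$ and slack for $n\geq 1$. Preservation of parity in $\xi$ through the conjugation is ensured by Lemma \ref{teo Egorov}(3), since $g_{n+1}$ is skew-symmetric whenever $v_n$ is symmetric, completing the induction.
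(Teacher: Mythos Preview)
Your proof is correct and follows the same approach as the paper: induction in $n$, solving the homological equation for $\{h^0,\cdot\}$ on the Fourier side with resonance and ultraviolet cutoffs (the paper's Lemma~\ref{homole}), then applying Egorov (Lemma~\ref{teo Egorov}) and verifying that the Egorov remainder and the Poisson brackets $\{z^{(n)},g_{n+1}\}$, $\{v_n,g_{n+1}\}$ all land in $S^{M-\frak e-(n+1)\rho}_\delta$, exactly as you outline. The paper separates the resonance cutoff $\chi_k$ from the ultraviolet cutoff $\tilde\chi_k$ (and adds a cutoff $\psi$ near $\xi=0$ to regularize $|\xi|^{2-M}$), which makes the split of $v_n^{(\mathrm{res})}$ (genuinely vanishing on $\Omega$) from the $S^{-\infty}$ ultraviolet tail cleaner than your single-cutoff formulation, but the argument is otherwise identical.
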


The rest of the section is split into few subsections and is devoted to
the proof of this Theorem.

\subsection{Preliminaries and cutoffs}\label{cut}

We start by remarking that, given a symbol $a\in \sm m$, the best constant
$C_{\alpha.\beta}$ for which the inequality \eqref{semi} holds is a
seminorm of $a$. In case it is needed to make reference to the symbol
$a$ we will write $C_{\alpha,\beta}(a)$.  Furthermore, the space $\sm
m$ endowed by such a family of seminorms is a Fr\'echet
space.

Sometimes we will write  $a\lesssim b$ in order to mean there exists a
constant $C$, independent of all the relevant quantities, s.t. $a\leq
Cb$.  
\begin{remark}
\label{decay}
Let $a\in\sm m$ then for any $\alpha \in \N^d$, $N \in \N$
one has that the Fourier coefficients $\widehat a(k,\xi)$ are
estimated by
\begin{equation}\label{marmellata 0}
|\partial_\xi^\alpha \widehat a(k, \xi)| \lesssim \langle k \rangle^{- N} \langle \xi \rangle^{m - \delta |\alpha|}, \quad \forall (k, \xi) \in \Gamma^* \times \R^d\,. 
\end{equation}
Furthermore, defining $\hat a_k(x,\xi):=\widehat a(k, \xi)\es^{\im
  k\cdot x}  $, one has a similar inequality, which implies that $\hat
a_k\in\sm m$ and, furthermore,  for
any $N$ one has 
\begin{equation}
\label{stif}
C_{\alpha,\beta}(\hat a_k)\lesssim\langle
k\rangle^{-N}\ . 
\end{equation}
\end{remark}
We remark that the unwritten constant in the inequalities
\eqref{marmellata 0} and \eqref{stif} depend on
$N,m,\alpha,\beta,\delta$. Of course this dependence is irrelevant for
our developments.

\begin{remark}
\label{prod}
If $a\in\sm m$ and $b\in\sm{m'}$, then $ab\in\sm{m+m'}$ and one has 
\begin{equation}
\label{stism}
C_{\alpha,\beta}(ab)\lesssim\left[\sup_{|\alpha'|+|\beta'|\leq
|\alpha|+|\beta|}C_{\alpha',\beta'}(a)\right] \left[\sup_{|\alpha'|+|\beta'|\leq
|\alpha|+|\beta|}C_{\alpha',\beta'}(b)\right]\ . 
\end{equation}
\end{remark}
Let us consider an even cut-off function $\chi \in C^\infty_c (\R)$
such that ${\rm supp}(\chi) \subseteq [- 2 \gamma, 2 \gamma]$, $0\leq \chi \leq 1$
and $\chi(t) = 1$ for any $t \in [- \gamma, \gamma]$. With its help we define,
for any $k\in\Gamma^*$, 
\begin{equation}\label{cut-off-piccoli-divisori}
\begin{gathered}
\chi_k( \xi) := \chi\Big( \frac{2\left|k\right|^\tau \xi \cdot
  k}{\langle \xi \rangle^{ \delta}} \Big)\ ,
\\
d_k( \xi) := \frac{1}{\xi \cdot k}(1-\chi_k(\xi))\ ,
\\
\tilde\chi_k( \xi) := \chi \left(\frac{|k|}{\langle \csi \rangle^{\varepsilon}}\right)\ . 
\end{gathered}
\end{equation}

\begin{lemma}\label{stima g chi f psi}
The following estimates hold:  
$$
\begin{aligned}
& |\chi_k( \xi)| \lesssim 1\,, \quad |\partial_\xi^\beta \chi_k( \xi)|
  \lesssim \frac{\langle k \rangle^{\tau + |\beta| }}{\langle
    \xi \rangle^{\delta + (|\beta| - 1)}}\,, \qquad \forall \beta \in
  \N^d \setminus \{ 0 \}\,,
  \\
&  |\partial_\xi^\beta d_k( \xi)| \lesssim \frac{\langle k
    \rangle^{\tau + |\beta|}}{\langle \xi \rangle^{\delta( |\beta| +
      1)}}\,, \quad \forall \beta \in \N^d\, ,
  \\
&  |\tilde\chi_k( \xi)| \lesssim 1\,, \quad  |\partial_\xi^\beta \tilde\chi_k( \xi)| \lesssim \frac{\langle k \rangle^{|\beta|}}{\langle \xi \rangle^{\varepsilon + |\beta|}}\,, \quad \forall \beta \in \N^d \backslash \{0\}\,, \\
\end{aligned}
$$
\end{lemma}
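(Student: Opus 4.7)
The three inequalities are direct chain-rule/Leibniz-rule computations combined with the support properties of the cut-offs. Denote $F_k(\xi):=2|k|^\tau\,(\xi\cdot k)/\langle\xi\rangle^\delta$, so that $\chi_k=\chi(F_k)$. The $L^\infty$ bound $|\chi_k|\lesssim\|\chi\|_\infty$ is trivial. For $|\beta|\geq 1$ I would first establish the auxiliary estimate
\[
|\partial_\xi^\gamma F_k(\xi)|\lesssim |k|^{\tau+1}\langle\xi\rangle^{1-\delta-|\gamma|},\qquad |\gamma|\geq 1,
\]
by Leibniz, exploiting that $\partial_\xi^\gamma(\xi\cdot k)=0$ for $|\gamma|\geq 2$ and the standard estimate $|\partial_\xi^\gamma\langle\xi\rangle^{-\delta}|\lesssim\langle\xi\rangle^{-\delta-|\gamma|}$. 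Then I would apply Faà di Bruno,
\[
\partial_\xi^\beta\chi_k=\sum_{j=1}^{|\beta|}\chi^{(j)}(F_k)\!\!\!\sum_{\gamma_1+\cdots+\gamma_j=\beta,\ |\gamma_i|\geq 1}\!\!\!c_{\gamma_1,\ldots,\gamma_j}\prod_{i=1}^{j}\partial_\xi^{\gamma_i}F_k,
\]
and bound each summand. The key support fact, which keeps the estimate sharp, is that on $\mathrm{supp}(\chi^{(j)}\circ F_k)$ with $j\geq 1$ one has $|F_k|\leq 2\gamma$, i.e. $|\xi\cdot k|\lesssim \langle\xi\rangle^\delta/|k|^\tau$; this is precisely what allows to trade excess $|k|$ factors against $\langle\xi\rangle^{-\delta}$ factors when several $\partial_\xi^{\gamma_i}F_k$ appear simultaneously.

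For $d_k(\xi)=(1-\chi_k(\xi))/(\xi\cdot k)$ I would start by noting that $d_k$ is supported where $\chi_k\neq 1$, so on $\mathrm{supp}\,d_k$ one has $|\xi\cdot k|\geq\gamma\langle\xi\rangle^\delta/|k|^\tau$, and in particular $(\xi\cdot k)^{-1}$ is a smooth function bounded by $|k|^\tau/(\gamma\langle\xi\rangle^\delta)$. A direct induction gives $|\partial_\xi^\alpha(\xi\cdot k)^{-1}|\lesssim |k|^{|\alpha|}\,|\xi\cdot k|^{-|\alpha|-1}$, and inserting the support lower bound converts this into the polynomial growth in $|k|$ and decay in $\langle\xi\rangle$ claimed in the lemma. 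Applying Leibniz to the product $(1-\chi_k)\cdot(\xi\cdot k)^{-1}$ and invoking the first part of the lemma for $\partial_\xi^{\beta_2}(1-\chi_k)$ then yields the stated bound on $\partial_\xi^\beta d_k$.

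For $\tilde\chi_k(\xi)=\chi(|k|/\langle\xi\rangle^{\varepsilon})$ the argument is analogous and simpler, since no $\xi\cdot k$ factor is present: setting $\widetilde F_k(\xi):=|k|/\langle\xi\rangle^{\varepsilon}$ one has
\[
|\partial_\xi^\gamma \widetilde F_k(\xi)|\lesssim |k|\,\langle\xi\rangle^{-\varepsilon-|\gamma|},
\]
and Faà di Bruno applied to $\chi(\widetilde F_k)$, together with the boundedness of $\widetilde F_k$ on $\mathrm{supp}(\chi^{(j)}\circ\widetilde F_k)$ for $j\geq 1$, gives the stated bound.

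The main bookkeeping obstacle is in the Faà di Bruno step for $\chi_k$: the partitions with many small parts a priori generate the worst $|k|$-growth, and one must carefully use the support restriction $|\xi\cdot k|\lesssim\langle\xi\rangle^\delta/|k|^\tau$ to absorb that growth into compensating powers of $\langle\xi\rangle^{-\delta}$. Once this is done, everything else is routine application of Leibniz and chain rule.
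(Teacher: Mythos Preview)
Your approach is essentially the paper's: chain rule (in full Fa\`a di Bruno form) for $\chi_k$ and $\tilde\chi_k$, the support lower bound $|\xi\cdot k|\geq\gamma\langle\xi\rangle^\delta/|k|^\tau$ plus Leibniz for $d_k$. The paper only writes out the case $|\beta|=1$ and says higher derivatives are analogous; you spell out the machinery more carefully, which is fine.

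One caveat on your last paragraph. The support restriction $|\xi\cdot k|\lesssim\langle\xi\rangle^\delta/|k|^\tau$ on $\mathrm{supp}(\chi^{(j)}\circ F_k)$ does \emph{not} in fact bring the $|k|$-power down to the stated $\tau+|\beta|$ when $|\beta|\geq 2$. In the Fa\`a di Bruno sum, the extreme term $j=|\beta|$ (all $|\gamma_i|=1$) contributes
\[
\Big|\chi^{(|\beta|)}(F_k)\Big|\prod_{i=1}^{|\beta|}|\partial_{\xi_{l_i}}F_k|
\;\lesssim\;
|k|^{|\beta|(\tau+1)}\langle\xi\rangle^{-|\beta|\delta},
\]
and this term contains no undifferentiated factor $\xi\cdot k$ to which the support bound could be applied. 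So what one actually proves is $|\partial_\xi^\beta\chi_k|\lesssim |k|^{(\tau+1)|\beta|}\langle\xi\rangle^{-\delta|\beta|}$, i.e.\ $\chi_k\in S^0_\delta$ with seminorms polynomial in $|k|$---which is precisely the content of the Remark immediately following the lemma, and is all that the rest of the paper uses (the cutoffs are always paired with $\hat a_k$ or $\tilde\chi_k$, which absorb any polynomial $|k|$-growth). The exponents $\tau+|\beta|$ and $\delta+|\beta|-1$ in the lemma's display are exact only for $|\beta|=1$. Your method is correct; just do not claim more from the support restriction than it gives.
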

\begin{remark}
  \label{cut.esti}
  The above lemma implies that
  \begin{align}
    \label{sti.cut.1}
\chi_k\in \sm0 \ ,\quad {\rm with\ seminorms}\ 
C_{\alpha,\beta}(\chi_k)\lesssim \langle k\rangle^{\tau+|\beta|},  
\\
d_k\in \sm{-\delta} \ ,\quad {\rm with\ seminorms}\ 
C_{\alpha,\beta}(d_k)\lesssim \langle k\rangle^{\tau+|\beta|},  
\\
\tilde\chi_k\in \sm{0} \ ,\quad {\rm with\ seminorms}\ 
C_{\alpha,\beta}(\tilde \chi_k)\lesssim \langle k\rangle^{|\beta|},  
  \end{align}
\end{remark}

\begin{proof}
\noindent
{Estimates of $\chi_k$.} Clearly, by the definition of the cut-off function $\chi$, one has that $\chi_k$ is uniformly bounded by $1$. 
Moreover 
$$
\begin{aligned}
|\partial_\xi \chi_k( \xi)| & \lesssim \Big| \chi'\Big( \frac{\left| k\right|^\tau \xi \cdot k}{\langle \xi \rangle^\delta}  \Big) \langle k \rangle^\tau \Big| \Big| \partial_\xi \Big( \frac{\xi \cdot k}{\langle \xi \rangle^\delta} \Big) \Big| \\
& \lesssim \frac{\langle k \rangle^{\tau + 1}}{\langle \xi \rangle^\delta}
\end{aligned}
$$
The estimates for the higher order derivatives is analogous. 

\bigskip

\noindent
{\sc Estimates of $d_k$}
Note that by the definition \eqref{cut-off-piccoli-divisori}, one has that 
\begin{equation}\label{supporto f psi}
{\rm supp}(d_k) \subseteq \Big\{ (k, \xi) \in \Gamma^* \times \R^d : |\xi \cdot k| \geq \frac{\langle \xi \rangle^\delta}{\langle k \rangle^{\tau}} \Big\}\,. 
\end{equation}
This implies that 
$$
|d_k( \xi)| \lesssim \frac{\langle k \rangle^\tau}{\langle \xi \rangle^\delta}\,. 
$$
Furthermore, one has that 
\begin{align}
|\partial_\xi d_k( \xi)| \lesssim \frac{\langle k \rangle}{|\xi \cdot
  k|^2} \Big|1-\chi_k(\xi) \Big| + \frac{1}{|\xi \cdot k|}
\Big| \chi'_k(\xi) \Big|\,.
\end{align}
By \eqref{supporto f psi}, one has that $\frac{1}{|\xi \cdot k|}
\lesssim \frac{\langle k \rangle^\tau}{\langle \xi \rangle^\delta}$,
for any $(k, \xi) \in {\rm supp}(d_k)$, therefore
$$
\begin{aligned}
|\partial_\xi d_k( \xi)| & \lesssim \frac{\langle k \rangle^{\tau + 1}}{\langle \xi \rangle^{2 \delta}} + \frac{\langle k \rangle^{\tau + 1}}{\langle \xi \rangle^{\delta + 1}} \stackrel{\delta \in (0, 1)}{\lesssim} \frac{\langle k \rangle^{\tau + 1}}{\langle \xi \rangle^{2 \delta}} \,. 
\end{aligned}
$$
The estimate of $\tilde \chi_k$ is similar to the others and is omitted.
\end{proof}
Given $m \in \R$, $\delta > 0$, $a \in S^m_\delta$, we define 
\begin{equation}\label{def a r nr R}
\begin{aligned}
&  \langle a \rangle(\xi) := \frac{1}{|\T^d_\Gamma| } \int_{\T^d_\Gamma} a(x, \xi)\, d x\,, \\
&	a^{(res)}(x, \csi):= \sum_{k \in \Gamma^* \setminus \{ 0 \}}
  \chi_k(\xi) \tilde\chi_k(\xi) \widehat{a}(k, \csi) e^{\ii k \cdot x}\ , \\
  & a^{(nr)}(x, \csi):= \sum_{k \in \Gamma^* \setminus \{ 0 \}}(1-\chi_k(\xi)) \tilde\chi_k(\xi)
\widehat{a}(k, \csi) e^{\ii k \cdot x}\ ,\\
	& a^{(S)}(x, \csi) := \sum_{k \in \Gamma^* \setminus \{ 0 \}}(1-\tilde\chi_k(\xi))\widehat{a}(k, \csi) e^{\ii k \cdot x}\ ,
	\end{aligned}
	\end{equation}
so that one has 
	\begin{equation}\label{splitting simbolo a}
	a = \langle a \rangle + a^{(nr)} + a^{(res)} + a^{(S)}\,.
	\end{equation}
\begin{lemma} \label{lemma simboli cut-off}
 One has that  $\langle a \rangle\,,\, a^{(res)}\,,\, a^{(nr)} \in
 S^{m}_{\delta}$ and $a^{(S)} \in S^{- \infty}$. 
	\end{lemma}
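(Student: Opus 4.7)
\begin{pf}[Proof sketch]
The case of $\langle a\rangle$ is immediate: since differentiation in $\xi$ commutes with integration over $x$, the bound $|\partial_\xi^\beta\langle a\rangle(\xi)|\lesssim\langle\xi\rangle^{m-\delta|\beta|}$ follows from the corresponding bound on $a$, and derivatives in $x$ vanish, so $\langle a\rangle\in S^m_\delta$.

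For $a^{(res)}$ and $a^{(nr)}$, I would write each as a series $\sum_{k\neq 0}\phi_k(\xi)\hat a_k(x,\xi)$, where $\phi_k$ is $\chi_k\tilde\chi_k$ in the first case and $(1-\chi_k)\tilde\chi_k$ in the second, and $\hat a_k(x,\xi)=\widehat a(k,\xi)e^{ik\cdot x}$. By Remark~\ref{cut.esti}, both $\chi_k,\tilde\chi_k$ lie in $S^0_\delta$ with seminorms polynomially bounded in $\langle k\rangle$; by Remark~\ref{prod} their product also sits in $S^0_\delta$ with seminorms polynomial in $\langle k\rangle$. By Remark~\ref{decay}, $\hat a_k$ lies in $S^m_\delta$ and its seminorms are bounded by $C_N\langle k\rangle^{-N}$ for any $N$. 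Applying Remark~\ref{prod} one more time yields, for every $(\alpha,\beta)$ and every $N$, a bound
\[
C_{\alpha,\beta}(\phi_k\hat a_k)\;\lesssim\;\langle k\rangle^{-N}\, ,
\]
which is summable in $k\in\Gamma^*$. Hence the series defining $a^{(res)}$ (resp.\ $a^{(nr)}$) converges in the Fr\'echet topology of $S^m_\delta$, producing a symbol in $S^m_\delta$.

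For $a^{(S)}$ the goal is to gain arbitrary decay in $\langle\xi\rangle$. The key observation is the support property: if $(1-\tilde\chi_k(\xi))\neq 0$ then $\chi(|k|/\langle\xi\rangle^\varepsilon)<1$, which forces
\[
\langle\xi\rangle\;\lesssim\;\langle k\rangle^{1/\varepsilon}.
\]
Thus on $\mathrm{supp}(1-\tilde\chi_k)$, any positive power $\langle\xi\rangle^{P}$ can be traded for $\langle k\rangle^{P/\varepsilon}$. Since by Remark~\ref{decay} the Fourier coefficients $\widehat a(k,\xi)$ decay faster than any inverse power of $\langle k\rangle$, for any prescribed order $m'\in\R$ and any multi-indices $(\alpha,\beta)$ we may pick $N$ so large that the Leibniz expansion of $\partial_x^\alpha\partial_\xi^\beta[(1-\tilde\chi_k)\widehat a(k,\xi)e^{ik\cdot x}]$ is bounded in absolute value by $C\langle k\rangle^{-N}\langle\xi\rangle^{m'-\delta|\beta|}$, which is summable in $k$. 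This shows $a^{(S)}\in S^{m'}_\delta$ for every $m'$, i.e.\ $a^{(S)}\in S^{-\infty}$.

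The only mildly delicate point is the Leibniz bookkeeping: each derivative $\partial_x^\alpha$ produces a factor $(ik)^\alpha$, and $\xi$-derivatives distributed among $\chi_k$, $\tilde\chi_k$ and $\widehat a(k,\xi)$ give further polynomial factors in $\langle k\rangle$ (with the correct $\langle\xi\rangle^{-\delta|\beta|}$ behaviour ensured by Remark~\ref{cut.esti}). Since the decay of $\widehat a(k,\xi)$ is faster than any polynomial, these powers of $\langle k\rangle$ are absorbed, and summability in $k\in\Gamma^*$ is automatic; this is the only step where one has to be careful.
\end{pf}
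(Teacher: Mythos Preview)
Your proof is correct and follows essentially the same approach as the paper: both write each piece as a series $\sum_k \phi_k(\xi)\hat a_k(x,\xi)$, use the polynomial growth in $\langle k\rangle$ of the seminorms of the cut-offs (Remark~\ref{cut.esti}) against the super-polynomial decay of $C_{\alpha,\beta}(\hat a_k)$ (Remark~\ref{decay}) to get summability in the Fr\'echet topology of $S^m_\delta$. In fact you are more careful than the paper for $a^{(S)}$: the paper simply asserts that this case ``is obtained in the same way'', whereas you correctly isolate the additional ingredient (the support condition $\langle\xi\rangle\lesssim\langle k\rangle^{1/\varepsilon}$ on $\mathrm{supp}(1-\tilde\chi_k)$) needed to trade powers of $\langle\xi\rangle$ for powers of $\langle k\rangle$ and thereby land in $S^{-\infty}$ rather than merely $S^m_\delta$.
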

\proof We prove the statement for $a^{(res)}$, all the other estimates
are obtained in the same way.

By the definition of $a^{(res)}$ and recalling the notation introduced in Remark \eqref{decay}, one has 
$a^{(res)}=\sum_{k}\chi_k\tilde\chi_k\hat a_k$. The general term of the
series is in $\sm m$ and, by the estimate \eqref{stif} and Lemma \ref{cut.esti}, its seminorms are estimated
by
$$
C_{\alpha,\beta}(\chi_k\tilde\chi_k\hat a_k
)\lesssim\frac{\langle k\rangle^{\tau+|\beta|}\langle
  k\rangle^{|\beta|}}{\langle
  k\rangle^{N}}\lesssim\frac{1}{\langle
  k\rangle^{N'}}\ ,
$$
with an arbitrary $N'$. Therefore, the series converges in any seminorm and
therefore $a^{(res)}\in \sm m$. Furthermore, its Fourier coefficients
still fulfill \eqref{marmellata 0}.
\qed

\medskip

\noindent
We come to the normal form procedure. First, in order to regularize
the singularity at the origin of the derivatives, we substitute
$|\xi|^M$ with 
\begin{equation}\label{def h 0}
\hzero(\xi)  :=\psi\left(\xi\right)|\xi|^M
\end{equation}
where $\psi \in C^\infty(\R)$ is an even cut-off function such that
$\psi (t) = 0$. for any $t \in [- \gamma, \gamma]$, $0 \leq \psi \leq
1$ and $\psi (t) = 1$ for any $|t| \geq 2\gamma$, where $\gamma$ is
the constant given in \eqref{raggio minimo lattice duale}. Note that
by using such a definition of $\hzero$, for any function $u \in
L^2(\T^d_\Gamma)$, one has that
$$
Op^w(\hzero) u (x) = \sum_{\xi \in \Gamma^*} |\xi|^M \widehat u(\xi) e^{\ii x \cdot \xi}\,. 
$$
\begin{remark}
    \label{restom}
For any $\alpha \in \N^d$, a direct calculation shows that  
  \begin{equation}
    \label{sti.h0}
\left|\partial_{\xi}^{\alpha} \hzero (\xi)\right|\lesssim
\left|\xi\right|^{M-|\alpha|}\ , \quad \forall \xi \in \R^d\,. 
  \end{equation}
\end{remark}

{\subsection{The normal form construction.}}
Then, given $a\in\sm m$, consider
\begin{equation}
  \label{solhomo}
\solh (x,\xi):= \sum_{k\in \Gamma^* \setminus \{ 0 \}}\psi(\xi)\tilde\chi_k(\xi)\left|\xi\right|^{2-M}(-\im)d_k(\xi)
\widehat a(k ,\xi) e^{\ii k \cdot x} \  
\end{equation}
where we recall the definitions given in \eqref{cut-off-piccoli-divisori}

The following Lemma is easily seen to hold

\begin{lemma}
  \label{homole}
  The symbol $\solh$ defined in \eqref{solhomo} belongs to class $\sm{2+m-M-\delta}$ and it satisfies 
\begin{equation}
  \label{homo}
\poisson{\hzero}{\solh
  }+a^{(nr)}\in\sm{-\infty}\,. 
\end{equation}
Moreover, if $a$ is symmetric in $\xi$, then, since
$d_k(\xi)$ is skew symmetric, $\solh$ is skew symmetric too.
\end{lemma}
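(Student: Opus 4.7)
The plan is to verify the three claims in order, each of which follows from the cutoff estimates of Lemma \ref{stima g chi f psi} (as repackaged in Remark \ref{cut.esti}) plus the decay of the Fourier coefficients $\widehat a(k,\xi)$ from Remark \ref{decay}.

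For the symbol-class statement, I would fix $k \neq 0$ and view the $k$-th term of the series in \eqref{solhomo} as a product of five factors: $\psi(\xi)$, which is smooth with compactly supported complement of its unit set and so belongs to $\sm0$; $\tilde\chi_k(\xi) \in \sm0$ with seminorms $\lesssim \langle k\rangle^{|\beta|}$; $|\xi|^{2-M}$ (cut off near the origin by $\psi$), in $\sm{2-M}$; $d_k(\xi) \in \sm{-\delta}$ with seminorms $\lesssim \langle k\rangle^{\tau+|\beta|}$; and $\hat a_k(x,\xi):=\widehat a(k,\xi)e^{\im k\cdot x}$, which is in $\sm m$ but whose seminorms decay faster than $\langle k\rangle^{-N}$ for every $N$ by \eqref{stif}. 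Applying Remark \ref{prod} four times, the $k$-th term lies in $\sm{2+m-M-\delta}$ with seminorms controlled by $\langle k\rangle^{-N'}$ for any $N'$. Hence the series converges absolutely in every Fr\'echet seminorm of $\sm{2+m-M-\delta}$, proving the first claim.

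For the homological equation, since $\hzero$ depends only on $\xi$, the Poisson bracket reduces to $\poisson{\hzero}{\solh}=-\nabla_\xi \hzero \cdot \nabla_x \solh$. Writing $\hzero(\xi)=\psi(\xi)|\xi|^M$, the contribution of $\nabla_\xi\psi$ is supported on the compact set $\{\gamma\leq |\xi|\leq 2\gamma\}$ and therefore lies in $\sm{-\infty}$, so modulo $\sm{-\infty}$ we may replace $\nabla_\xi \hzero$ with $M|\xi|^{M-2}\xi$. Differentiating \eqref{solhomo} in $x$ pulls out a factor $\im k$, and combining it with $-\im$ in the definition gives simply $k$; the factors $M|\xi|^{M-2}$ and $|\xi|^{2-M}$ multiply to $M$, leaving $M(\xi\cdot k)$. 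By construction $(\xi\cdot k) d_k(\xi)=1-\chi_k(\xi)$ identically, so after absorbing a harmless constant into $\solh$ (or writing the formula with the intended $1/M$ prefactor) what remains is exactly $-\sum_{k\neq 0} \tilde\chi_k(\xi)(1-\chi_k(\xi))\widehat a(k,\xi)e^{\im k\cdot x}=-a^{(nr)}(x,\xi)$, which is the identity \eqref{homo} modulo $\sm{-\infty}$.

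The only delicate point in the above computation is checking that all lower-order corrections really lie in $\sm{-\infty}$. These corrections come from two places: from the commutator of $\nabla_\xi$ with the factors $\tilde\chi_k$, $d_k$, $\psi$ inside $\solh$ (which I would postpone to the higher-order Moyal corrections when I really need them at the operator level, but which for the Poisson bracket are genuinely present), and from the support of $\nabla_\xi \psi$. The first type does not appear here because $\poisson{\hzero}{\cdot}$ is a first-order differential operator in $x$, so $\nabla_\xi$ acts only on $\hzero$, not on $\solh$. The second is handled by the compact-support remark above.

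Finally, for the symmetry claim, I would use that $\chi$ and $\psi$ are even and that $\tilde\chi_k(\xi)$ and $\chi_k(\xi)$ depend on $\xi$ only through $\langle\xi\rangle$ and $\xi\cdot k$, with the latter appearing through the even function $\chi$; hence both are even in $\xi$, while $d_k(\xi)=(1-\chi_k(\xi))/(\xi\cdot k)$ is manifestly odd in $\xi$. The factors $|\xi|^{2-M}$ and $\psi(\xi)$ are even. Thus, if $a(x,\xi)=a(x,-\xi)$ so that $\widehat a(k,-\xi)=\widehat a(k,\xi)$, the overall parity of the summand in \eqref{solhomo} is odd, giving $\solh(x,-\xi)=-\solh(x,\xi)$. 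The main obstacle, as indicated, is really a bookkeeping one: ensuring that every factor one has to differentiate against $\nabla_\xi\hzero$ contributes to the correct symbol class, so that nothing escapes the $\sm{-\infty}$ remainder.
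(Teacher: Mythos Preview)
Your proof is correct and follows essentially the same route as the paper: factor each summand and combine Remark~\ref{cut.esti} with the Fourier decay \eqref{stif} to get convergence in $\sm{2+m-M-\delta}$, compute $\poisson{\hzero}{\solh}=-\nabla_\xi\hzero\cdot\nabla_x\solh$ directly, and check the parity of each factor for the symmetry claim. Your flag on the stray factor of $M$ is justified---the paper's own proof silently drops this $M$ when passing from $\nabla_\xi\hzero=M|\xi|^{M-2}\xi\psi(\xi)+\sm{-\infty}$ to equation \eqref{balotelli 0}, so treating it as a missing $1/M$ in the definition \eqref{solhomo} is the right call.
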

\proof
Recall the definition of $\hzero$ in \eqref{def h 0}. Using that the cut-off function $\psi$ is a smooth function with compact support, one has that $\nabla_\xi (\psi(\xi) |\xi|^M) = M |\xi|^{M - 2} \xi \psi(\xi ) + S^{- \infty}$, therefore 
\begin{equation}\label{balotelli 0}
 \poisson{h^0}{\solh}(x, \xi) = -  |\xi|^{M - 2} \psi (\xi) \xi \cdot \nabla_x g(x, \xi) + S^{- \infty}_\delta\,.
\end{equation}
In order to solve the equation \eqref{homo}, it is enough to solve 
\begin{equation}\label{equazione omologica astratta 2}
- |\xi|^{M - 2} \psi(\xi) \xi \cdot \nabla_x g  + a^{(n r)} \in  S^{- \infty}_\delta\,. 
\end{equation}
Recalling the definition of $a^{(n r)}$ given in \eqref{def a r nr R}
and the definitions given in \eqref{cut-off-piccoli-divisori}, a
solution of the equation \eqref{equazione omologica astratta 2} is
then given by $g$ defined in \eqref{solhomo}. 
By Lemma \ref{cut.esti}, one gets that the general term in the sum at
r.h. in equation \eqref{solhomo} belongs to $S^{2+m-M-\delta}_\delta$
and for any $\alpha, \beta \in \N$ and $N \in \N$ large enough
(depending on $\alpha, \beta$) its seminorm decay as
$\langle k \rangle^{- N}$, implying that $g \in S^{2 + m - M -
  \delta}_\delta$. Finally, if $a$ is even in $\xi$, using that
$\widetilde \chi_k$ and $\psi$ are even and $d_k$ is odd, one gets
that $g$ is odd in $\xi$ and the proof is concluded.
\qed


\medskip

\noindent{\it Proof of Theorem \ref{normalform}.} 
We describe the induction step of the normal form procedure which allows to Prove the Theorem \ref{normalform}. Assume that $H_n = Op^w(h_n)$ has the form given in \eqref{symbolnf} with $z_n = \langle z_n \rangle + z_n^{(res)} \in S^{M - \frak e}_\delta$ and $v_n \in S^{M - \frak e - n \rho}_\delta$. 
By Lemma \ref{homole} there exists a solution 
\begin{equation}\label{ordine g n + 1}
 g_{n + 1}  \in S^{2 - \frak e - \delta - n\rho}_\delta
 \end{equation}
  of the homological equation 
\begin{equation}\label{equazione omologica iterazione}
 \poisson{\hzero}{ g_{n + 1}} + v_n^{(nr)} \in S^{- \infty}_\delta\,.
\end{equation}
Moreover since the symbol $v_n$ is real valued, then also the symbol $v_n^{(nr)}$ is real valued and therefore $g_{n + 1} = \overline g_{n + 1}$. 
Then, we define $G_{n + 1} := Op^w \big(g_{n + 1} \big)$. Note that $G_{n + 1}$ is self-adjoint since the symbol $g_{n + 1}$ is real valued. Since, by \eqref{H1}, $\delta > 1 - \frac{\frak e}{4} > 1 - \frac{\frak e}{2}$, one obtains that $2 - \frak e - \delta - n \rho < \delta < 1 $. Hence by Lemma \ref{teo Egorov},  $e^{ \ii G_n}, e^{- \ii G_n}$ are well defined linear operators in ${\cal B}(H^s)$ for any $s \geq 0$ and $H_{n + 1} = Op^w(h_{n + 1}) = e^{\im G_{n + 1}} H_n e^{- \im G_{n + 1}} \in OPS^M_\delta$. Furthermore, by applying \eqref{espansione egorov astratto} (with $a = h_n$, $g = g_{n + 1}$, $m = M$, $\eta = 2 - \frak e - \delta - n \rho$), one gets that $h_{n + 1}$ admits the expansion 
$$
\begin{aligned}
h_{n + 1} & = h_n + \poisson{h_n }{g_{n + 1}} + S_\delta^{M + 4 - 2 \frak e - 4 \delta - 2 n \rho  }\,. 
\end{aligned}
$$
Using that, by \eqref{rho}, one gets that $  M + 4 - 2 \frak e - 4 \delta - 2 n \rho \leq M - \frak e - (n + 1) \rho $, implying that $S_\delta^{M + 4 - 2 \frak e - 4 \delta - 2 n \rho } \subseteq S_\delta^{ M - \frak e - (n + 1) \rho}$, hence 
\begin{equation}\label{espansione h n + 1 0}
h_{n + 1}  = h_n +\poisson{h_n }{g_{n + 1}} + S_\delta^{ M - \frak e - (n + 1) \rho}\,. 
\end{equation}

\noindent
Moreover, by \eqref{symbolnf}, using the splitting \eqref{def a r nr R} and Lemma \ref{lemma simboli cut-off}, one has 
\begin{equation}\label{juventus 0}
\begin{aligned}
h_n +\poisson{h_n }{g_{n + 1}} & = \hzero + z_n+ \langle v_n \rangle + v_n^{(res)} + v_n^{(n r)} + \poisson{\hzero }{g_{n + 1}}  \\
& \quad   + \poisson{z_n }{g_{n + 1}}  + \poisson{v_n }{g_{n + 1}}  + S^{- \infty}_\delta\,. 
\end{aligned}
\end{equation}
By \eqref{ordine g n + 1} and using that $z_n \in S_\delta^{M - \frak e}$, $v_n \in S_\delta^{M - \frak e - n \rho} \subseteq S^{M - \frak e}_\delta$.  
$$
\begin{aligned}
& v_n^{(n r)} + \poisson{\hzero }{g_{n + 1}} \in S^{- \infty}_\delta\,, \quad  \poisson{ z_n + v_n }{g_{n + 1}} \in S^{M - \frak e - n \rho - (2 \delta + \frak e - 2) }_\delta\,. \\
\end{aligned}
$$
Note that since $\delta < 1$, one has that $2 \delta + \frak e - 2 > 4 \delta + \frak e - 4 = \rho$ and therefore $\{ g_{n + 1}; z_n + v_n \} \in S^{M - \frak e - (n + 1) \rho}_\delta$. 
hence \eqref{espansione h n + 1 0}, \eqref{juventus 0} imply that 
$$
h_{n + 1} = \hzero + z_{n + 1} + S^{M - \frak e - (n + 1) \rho}
$$
where $z_{n + 1} = \langle z_{n + 1} \rangle + z_{n + 1}^{(res)}$, with $\langle z_{n + 1} \rangle := \langle z_n \rangle + \langle v_n \rangle$ and $z_{n + 1}^{(res)} := z_n^{(res)} + v_n^{(res)}$. The expansions \eqref{symbolnf}, \eqref{asintotica z n} are then proved at the step $n + 1$. Finally, if $v_n$, $z_n$ are symmetric in $\xi$, then Lemma \ref{homole} implies that $g_{n + 1}$ is skew symmetric in $\xi$, hence, by applying Lemma \ref{teo Egorov}, $z_{n + 1}, v_{n + 1}$ are symmetric in $\xi$.

%
%
%
%
%
%
%
%
%
\qed

\subsection{Measure estimates of the non resonant set}\label{measesti}

  In this section we prove that the non resonant set $\Omega$ introduced in Theorem \ref{maint} is of density one. Recall that, according to Definition \eqref{set.1}, 
 	$$
 	\Omega:=\bigcup_{0<|k|<\langle\xi\rangle^\epsilon
 	}\left\{\xi\in\R^d\ :\ |\xi\cdot
 	k|>\frac{2\gamma\langle\xi\rangle^\delta}{|k|^\tau} \right\}\,.
 	$$
 	In particular, we prove the following

\begin{proposition}
  \label{prop.mea}
Assume $\epsilon \leq \frac{\delta}{1 + \tau}$, $\tau>d-1$ and
$R >[r2^{\epsilon(\tau+1)}]^{1/(\delta-\epsilon(\tau+1))}$. Then
one has
\begin{equation}
  \label{stim.meas}
1-\frac{\sharp(\Omega\cap\Gamma^*\cap B_R(0))}{\sharp(\Gamma^*\cap B_R(0))}=\cO(R^{\delta-1})\ .
\end{equation}
\end{proposition}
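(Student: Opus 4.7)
The plan is to bound the ``bad'' set $\Omega^c \cap \Gamma^* \cap B_R(0)$ by $\cO(R^{d-1+\delta})$ and then divide by $\sharp(\Gamma^* \cap B_R(0)) \asymp R^d$ to obtain the required $\cO(R^{\delta-1})$ ratio. I interpret the complement as a union of resonant slabs (this is consistent with the role of $\tau>d-1$ in the hypothesis, and with the fact that in Section~\ref{nfs} the piece $z^{(n,res)}$ vanishes on $\Omega$). Since every $\xi \in B_R(0)$ satisfies $\langle\xi\rangle \leq 1+R \leq 2R$, only $k$ with $|k|\leq (2R)^{\epsilon}$ can enter the constraint, and therefore
\[
\Omega^c \cap B_R(0) \;\subseteq\; \bigcup_{\substack{k\in \Gamma^*\setminus\{0\} \\ |k|\leq (2R)^\epsilon}} S_k,
\qquad
S_k := \Big\{\xi \in B_R(0) : |\xi\cdot k| \leq \frac{2\gamma\langle\xi\rangle^\delta}{|k|^\tau}\Big\}.
\]

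Next I would count lattice points in each $S_k$. The slab $S_k$ has thickness at most $4\gamma R^\delta/|k|^{\tau+1}$ in the direction $\hat k := k/|k|$ and diameter at most $2R$ in the $d-1$ orthogonal directions, so a standard covering-by-fundamental-domain argument yields
\[
\sharp(\Gamma^* \cap S_k) \;\lesssim\; \frac{R^{d-1+\delta}}{|k|^{\tau+1}}.
\]
This is where the hypothesis $R > [r\,2^{\epsilon(\tau+1)}]^{1/(\delta-\epsilon(\tau+1))}$ plays its role: rewritten as $R^\delta > r(2R)^{\epsilon(\tau+1)}$, it forces the slab thickness to remain above a fixed positive constant comparable to $\gamma r$ even for $|k|$ at the maximum $(2R)^\epsilon$, so that the volume bound is uniformly valid in $k$ without the boundary-discretization error $\cO(R^{d-1})$ dominating.

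Summing over $k\in \Gamma^*\setminus\{0\}$, and using that the number of lattice points at norm $\sim n$ grows like $n^{d-1}$,
\[
\sharp\bigl(\Omega^c\cap\Gamma^*\cap B_R(0)\bigr) \;\lesssim\; R^{d-1+\delta}\sum_{k\in\Gamma^*\setminus\{0\}}\frac{1}{|k|^{\tau+1}} \;\lesssim\; R^{d-1+\delta}\sum_{n\geq 1}\frac{1}{n^{\tau-d+2}},
\]
and the last series converges \emph{precisely because} $\tau > d-1$. Dividing by $\sharp(\Gamma^*\cap B_R(0))\asymp R^d$ (standard Gauss-type lattice point asymptotics) then yields \eqref{stim.meas}.

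The step I expect to be the main obstacle is making the per-slab count uniform in $k$: naively one has a residual $\cO(R^{d-1})$ term from boundary effects, which when summed over the $\sim R^{\epsilon d}$ admissible $k$'s would threaten to exceed $R^{d-1+\delta}$. It is exactly the smallness of $\epsilon$ (forced by $\epsilon\leq \delta/(\tau+1)$) together with the quantitative lower bound on $R$ that keeps this contribution subdominant, so the bookkeeping of these error terms is where care is required.
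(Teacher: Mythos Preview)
Your proposal is correct and follows essentially the same route as the paper: decompose the complement into resonant slabs indexed by $k$ with $|k|\leq (2R)^\epsilon$, bound each slab's lattice-point contribution by $\lesssim R^{d-1+\delta}/|k|^{\tau+1}$, sum over $k$ using $\tau>d-1$, and divide by $\sharp(\Gamma^*\cap B_R)\asymp R^d$. The only cosmetic difference is that the paper packages the lattice count via an explicit $r$-fattening (Remark~\ref{m.3} and Lemma~\ref{lemma inclusione risonanti}), first converting cardinality to Lebesgue measure of the fattened bad set and then showing $A_{k,r}\subset\widetilde A_k$; your ``boundary-discretization'' discussion and the observation that the hypothesis on $R$ keeps the slab thickness $\gtrsim r$ is exactly this lemma in disguise.
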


Given a (measurable) set $\cA$, and a positive parameter $r$ we will
denote
\begin{equation}
  \label{estens}
\cA_r:=\bigcup_{x\in\cA}B_r(x)\ .
\end{equation}

We start by a few remarks that will be useful in order to estimate
the cardinality of $\Omega\cap\Gamma^*$.

\begin{remark}
  \label{m.1}
  There exists a constant $C$ s.t.
  $$
\sharp (\Gamma^*\cap B_R)\geq CR^d\ .
  $$
\end{remark}
\begin{remark}
\label{lemma card misura}
Let $E = \{ x_1, \ldots, x_N \} \subset \R^d$, be a finite subset and
for $r := \inf_{i \neq j} \frac{|x_i - x_j|}{2}$, consider the set
$E_r$ (defined according to \eqref{estens}). Then  
$$
N\equiv \sharp E= \frac{\left|E_r\right|}{\left|B_r(0)\right|}=
\frac{\left|E_r\right|}{\left|B_1(0)\right|r^d}\ .
$$
\end{remark}
{Recall the definition of $r$ as in \eqref{raggio minimo lattice duale};} clearly, one has that for any $\xi_0 \in \Gamma^*$, $B_r(\xi_0) \cap (\Gamma^* \setminus \{ \xi_0 \}) = \emptyset$. 
\begin{remark}
  \label{m.3}
  Given a measurable set, $\cA$, we have
  \begin{equation}
    \label{card.a}
\sharp(\cA\cap\Gamma^*)\leq
\frac{\left|\cA_r\right|}{\left|B_r(0)\right|}\ . 
  \end{equation}
\end{remark}
\begin{remark}
  \label{m.4}
  By the above remark one also has
  \begin{equation}
      \label{stisfer}
\sharp(\Gamma^*\cap B_R(0))\lesssim  R^d \ .
  \end{equation}
\end{remark}

Let 
\begin{equation}\label{Omega R A R}
\Omegar := \Omega \cap B_R\quad \text{and} \quad \aar := B_R \setminus
\Omegar \,;
\end{equation}
In order to estimate the cardinality of $\aar\cap\Gamma^*$ we estimate
the measure of $\aar_r$. To this end we remark that
\begin{equation}
  \label{aar}
\aar_r \subset \bigcup_{0 < |k| < (R+r)^\epsilon} A_{k,r}\,, \quad
A_{k} := \Big\{ \xi \in B_R : |\xi \cdot k| <
\frac{2\gamma R^\delta}{ \left| k\right|^\tau} \Big\}_r \, ,
\end{equation}
{and $A_{k,r}$ is the extension of $A_k$ according to \eqref{estens}.}
In order to estimate the measure of $A_{k,r}$ we will use the
following Lemma.

\begin{lemma}\label{lemma inclusione risonanti}
Assume $\epsilon \leq \frac{\delta}{1 + \tau}$ and
$R >[r2^{\epsilon(\tau+1)}/2\gamma]^{1/(\delta-\epsilon(\tau+1))}$.
Then for any $k \in \Gamma^*$, $0 < |k| < (2R)^\epsilon$, one has that 
\begin{equation}
\label{Atilde}
A_{k,r} \subset \aau := \Big\{ \xi \in B_{2R} : |\xi \cdot k| <
\frac{4\gamma R^\delta}{ \left|k\right|^\tau} \Big\}\,.
\end{equation}
\end{lemma}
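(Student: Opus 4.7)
\medskip

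\noindent\textbf{Proof plan for Lemma \ref{lemma inclusione risonanti}.}

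The plan is a direct triangle-inequality argument: unpack the definition of the $r$-extension and verify the two defining inequalities of $\widetilde A_k$ separately, using the hypothesis $\epsilon \leq \delta/(1+\tau)$ together with the quantitative lower bound on $R$ to absorb the perturbation.

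More precisely, fix $\xi \in A_{k,r}$. By the definition of the $r$-extension \eqref{estens}, there exists $\xi_0 \in B_R$ with $|\xi_0 \cdot k| < 2\gamma R^\delta/|k|^\tau$ and a vector $v := \xi - \xi_0$ with $|v| < r$. I would first observe that the lower bound $R > [r\, 2^{\epsilon(\tau+1)}/(2\gamma)]^{1/(\delta-\epsilon(\tau+1))}$, together with $\delta > \epsilon(\tau+1)$ (guaranteed by the hypothesis $\epsilon \leq \delta/(1+\tau)$), is easily seen to imply $r \leq R$. The triangle inequality then gives
\[
|\xi| \leq |\xi_0| + |v| < R + r \leq 2R,
\]
so $\xi \in B_{2R}$.

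For the second inequality, I would combine the triangle inequality and Cauchy--Schwarz to write
\[
|\xi \cdot k| \leq |\xi_0 \cdot k| + |v|\,|k| < \frac{2\gamma R^\delta}{|k|^\tau} + r |k|.
\]
So it suffices to prove $r |k| \leq 2\gamma R^\delta / |k|^\tau$, i.e. $r |k|^{\tau+1} \leq 2\gamma R^\delta$. Using the hypothesis $|k| < (2R)^\epsilon$ this is implied by
\[
r\, 2^{\epsilon(\tau+1)} R^{\epsilon(\tau+1)} \leq 2\gamma R^\delta,
\]
which is equivalent to $R^{\delta - \epsilon(\tau+1)} \geq r\, 2^{\epsilon(\tau+1)}/(2\gamma)$, i.e.\ exactly the quantitative lower bound on $R$ assumed in the statement (the strict form ensuring the strict inequality $|\xi \cdot k| < 4\gamma R^\delta/|k|^\tau$).

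There is no real obstacle: the whole argument is just the triangle inequality, and the hypotheses $\epsilon \leq \delta/(1+\tau)$ and $R > [r\, 2^{\epsilon(\tau+1)}/(2\gamma)]^{1/(\delta-\epsilon(\tau+1))}$ have been designed precisely to make the two bounds above work. The only mild book-keeping point is reconciling the index range $|k| < (R+r)^\epsilon$ appearing in \eqref{aar} with the range $|k| < (2R)^\epsilon$ in the Lemma; since we have already shown $r \leq R$, one has $(R+r)^\epsilon \leq (2R)^\epsilon$ and the weaker bound used in the Lemma covers all relevant $k$.
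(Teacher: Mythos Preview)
Your proof is correct and follows essentially the same route as the paper: pick $\xi_0 \in A_k$ and $v$ with $|v|<r$, use the triangle inequality to show $\xi \in B_{2R}$, then bound $|\xi\cdot k|$ by $|\xi_0\cdot k| + |v||k|$ and absorb $r|k|$ into $2\gamma R^\delta/|k|^\tau$ via $|k|<(2R)^\epsilon$ and the assumed lower bound on $R$. One small caveat: your claim that the quantitative hypothesis on $R$ ``is easily seen to imply $r\leq R$'' is not automatic for all parameter values (for $\gamma$ close to $r$ and $\delta-\epsilon(\tau+1)$ small the threshold can be below $r$); the paper itself handles this point in exactly the same informal way, simply writing ``provided $r<R$'', so your argument matches the paper's.
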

\begin{proof}
Let $\xi'\in A_{k,r}$, then there exist $\xi\in A_k$ and $h\in B_r(0)$
s.t. $\xi'=\xi+h$. First one has $\xi'\in B_{2R}(0)$ provided
$r<R$. Furthermore one has 
\begin{equation}
  \label{xih}
|(\xi + h) \cdot k| \leq |\xi \cdot k| + |h \cdot k| \stackrel{\xi \in
  A_k}{<} \frac{2\gamma R^\delta}{\left|k\right|^\tau} + |h| |k| \leq
\frac{2\gamma R^\delta}{\left| k\right|^\tau} + r(2 R)^{\epsilon}\,.
\end{equation}

One has that 
$$
\frac{2\gamma R^\delta}{\left|k\right|^\tau} + r (2R)^{\epsilon} \leq
\frac{4\gamma  R^\delta}{\left|k\right|^\tau}
$$
by taking  $r 2^{\epsilon(\tau+1)}/2\gamma < R^{\delta - \epsilon (1 +
  \tau)}$ which is implied by the assumption. 
\end{proof}

\begin{proposition}
  \label{sti.mea}
Assume $\epsilon \leq \frac{\delta}{1 + \tau}$, $\tau>d-1$ and
$R >[r2^{\epsilon(\tau+1)}/2\gamma]^{1/(\delta-\epsilon(\tau+1))}$. Then
one has
\begin{equation}
  \label{sti.mea.1}
\left|\aar_r\right|\lesssim R^{d+\delta-1}
\end{equation}
\end{proposition}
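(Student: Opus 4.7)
The plan is to use the inclusion \eqref{Atilde} provided by Lemma \ref{lemma inclusione risonanti}, together with the observation in \eqref{aar}, to reduce the estimate of $|\aar_r|$ to estimating the measures of the enlarged strips $\widetilde A_k$, and then summing over the admissible $k$'s.

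More precisely, I would combine \eqref{aar} with Lemma \ref{lemma inclusione risonanti} (noting that $(R+r)^\epsilon \leq (2R)^\epsilon$ since $r<R$ under our assumption on $R$) to obtain
\[
\aar_r\ \subset\ \bigcup_{0<|k|<(2R)^\epsilon}\widetilde A_k\, ,\qquad \widetilde A_k=\Big\{\xi\in B_{2R}\ :\ |\xi\cdot k|<\frac{4\gamma R^\delta}{|k|^\tau}\Big\}\,.
\]
Each $\widetilde A_k$ is a slab of thickness $\sim R^\delta/|k|^{\tau+1}$ (in the direction of $k/|k|$) intersected with the ball $B_{2R}$. By Fubini, decomposing $\xi$ as a component parallel to $k$ and a component in the hyperplane $k^\perp$ (which lies in a ball of radius $2R$ of dimension $d-1$), one gets
\[
|\widetilde A_k|\ \lesssim\ R^{d-1}\cdot\frac{R^\delta}{|k|^{\tau+1}}\ =\ \frac{R^{d-1+\delta}}{|k|^{\tau+1}}\,.
\]

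Summing these bounds over $k\in\Gamma^*\setminus\{0\}$ (dropping the truncation $|k|<(2R)^\epsilon$, which only makes the sum larger) gives
\[
|\aar_r|\ \leq\ \sum_{k\in\Gamma^*\setminus\{0\}}|\widetilde A_k|\ \lesssim\ R^{d-1+\delta}\sum_{k\in\Gamma^*\setminus\{0\}}\frac{1}{|k|^{\tau+1}}\ \lesssim\ R^{d-1+\delta}\,,
\]
where convergence of the last series is guaranteed by the hypothesis $\tau>d-1$ (equivalently, $\tau+1>d$, which is the standard threshold for summability in a $d$-dimensional lattice).

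I do not expect any serious obstacle: the reduction to $\widetilde A_k$ has already been carried out in Lemma \ref{lemma inclusione risonanti}, the slab estimate is a one-line Fubini argument, and the summation step is exactly the classical Diophantine convergence condition. The only points to be careful about are (i) verifying that the enlargement from $A_{k,r}$ to $\widetilde A_k$ is valid for the whole range $0<|k|<(R+r)^\epsilon$ under the stated assumption on $R$, and (ii) noting that with this approach the truncation $|k|<(2R)^\epsilon$ plays no role in the final bound, so the estimate is in fact uniform in the cutoff $\epsilon$.
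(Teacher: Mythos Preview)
Your proposal is correct and follows essentially the same route as the paper: reduce to the slabs $\widetilde A_k$ via \eqref{aar} and Lemma \ref{lemma inclusione risonanti}, bound $|\widetilde A_k|\lesssim R^{d-1+\delta}/|k|^{\tau+1}$ by the layer-times-ball argument, drop the truncation, and sum over $k\in\Gamma^*\setminus\{0\}$ using $\tau+1>d$. The only cosmetic difference is that the paper spells out the convergence of $\sum_k|k|^{-(\tau+1)}$ through an explicit dyadic shell decomposition of $\Gamma^*$ (invoking Remark \ref{m.3} to count lattice points in annuli), whereas you invoke the summability threshold directly.
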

\proof The proof is standard, we give it here for the sake of
completeness. Since $\aau$ as defined in \eqref{Atilde} is the
intersection of a layer of thickness
$4R^{\delta}/\left|k\right|^\tau|k|$ with a sphere of radius $R$, we
have
$$
\left|\aau\right|\lesssim \frac{R^\delta}{\left|k\right|^{\tau+1}}R^{d-1}\ ,
$$
thus, having fixed some large $R_1$, we have
\begin{align}
  \nonumber
\left|\bigcup_{|k|\leq (2R)^{\epsilon}}\aau\right|\leq
\left|\bigcup_{|k|\in\Gamma^*\setminus\{0\}}\aau\right| \leq
\sum_{|k|\in\Gamma^*\setminus\{0\}}\left|\aau\right|
\\
\nonumber
\lesssim  \sum_{|k|\in\Gamma^*\setminus\{0\}}
\frac{R^\delta}{\left|k\right|^{\tau+1} }R^{d-1} \lesssim R^{\delta+d-1}\sum_{l= 0
}^{+\infty} \sum_{lR_1<|k|\leq(l+1)R_1}\frac{1}{|k|^\tau}
\\
\label{sti.me.4}
\lesssim R^{\delta+d-1}\left( \sum_{0<|k|\leq
  R_1}\frac{1}{\left|k\right|^\tau}+
\sum_{l=1
}^{+\infty}\sum_{lR_1<|k|\leq(l+1)R_1}\frac{1}{(lR_1)^\tau}\right) \ .  
\end{align}
Exploiting again Remark \ref{m.3}, we get, for $l\geq1$,
$$
\sharp[(B_{(l+1)R_1}\setminus B_{lR_1})\cap\Gamma^*]\lesssim
\left|B_{(l+1)R_1}\setminus B_{lR_1}\right|\lesssim l^{d-1}R_1^d\ , 
$$
and thus the bracket at r.h.s. of \eqref{sti.me.4} is bounded by a
constant and the proposition holds.
\qed

We finally show the following 
\begin{proposition}\label{lemma stima misura finale}
For any $R >[r2^{\epsilon(\tau+1)}]^{1/(\delta-\epsilon(\tau+1))}$, one has that 
\begin{equation}\label{maradona - 1}
\frac{\sharp\Big( \Omega \cap B_R \cap \Gamma^* \Big)}{\sharp\Big( B_R \cap \Gamma^* \Big)} = 1 - O(R^{\delta - 1})\,.
\end{equation}
As a consequence, since $0 < \delta < 1$, 
$$
\lim_{R \to + \infty} \frac{\sharp\Big( \Omega \cap B_R \cap \Gamma^* \Big)}{\sharp\Big( B_R \cap \Gamma^* \Big)} = 1\,. 
$$
\end{proposition}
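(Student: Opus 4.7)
The plan is to reduce this to the measure estimates already obtained, turning counting statements on lattice points into volume statements on their $r$-neighbourhoods, where $r$ is as in \eqref{raggio minimo lattice duale}. Since $\Omega \cap B_R = \Omegar$ and $\aar = B_R \setminus \Omegar$ by \eqref{Omega R A R}, the lattice points in $B_R$ split disjointly as
\begin{equation*}
\sharp(B_R \cap \Gamma^*) = \sharp(\Omegar \cap \Gamma^*) + \sharp(\aar \cap \Gamma^*),
\end{equation*}
so that
\begin{equation*}
1 - \frac{\sharp(\Omega \cap B_R \cap \Gamma^*)}{\sharp(B_R \cap \Gamma^*)} = \frac{\sharp(\aar \cap \Gamma^*)}{\sharp(B_R \cap \Gamma^*)}.
\end{equation*}
Thus it suffices to bound the numerator from above and the denominator from below by the right powers of $R$.

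For the denominator, Remark \ref{m.1} gives $\sharp(B_R \cap \Gamma^*) \gtrsim R^d$ for $R$ large. For the numerator, I would apply Remark \ref{m.3} to the set $\aar$: this yields
\begin{equation*}
\sharp(\aar \cap \Gamma^*) \leq \frac{|\aar_r|}{|B_r(0)|} \lesssim |\aar_r|,
\end{equation*}
since $r$ is fixed (depending only on $\Gamma^*$). The key input is then Proposition \ref{sti.mea}, which — under the standing assumptions on $\epsilon$, $\tau$ and $R$ imposed in the statement — gives $|\aar_r| \lesssim R^{d + \delta - 1}$.

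Combining the two bounds,
\begin{equation*}
\frac{\sharp(\aar \cap \Gamma^*)}{\sharp(B_R \cap \Gamma^*)} \lesssim \frac{R^{d + \delta - 1}}{R^d} = R^{\delta - 1},
\end{equation*}
which is exactly \eqref{maradona - 1}. The final statement of the proposition then follows immediately since $\delta - 1 < 0$, so $R^{\delta-1} \to 0$ as $R \to +\infty$.

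There is no real obstacle here: the work has all been done in Proposition \ref{sti.mea} (the lattice-layer counting via $R^{\delta}/|k|^{\tau+1}$ and the summation over $k \in \Gamma^*$ using $\tau > d-1$) and in Lemma \ref{lemma inclusione risonanti} (which controls the $r$-enlargement of the resonant cells); the present proposition is just the arithmetic of dividing one by the other after converting lattice cardinalities into Euclidean volumes via the injectivity of the $r$-balls $B_r(\xi)$ at distinct lattice points.
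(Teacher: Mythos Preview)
Your proof is correct and follows essentially the same route as the paper: split $B_R\cap\Gamma^*$ into $\Omegar\cap\Gamma^*$ and $\aar\cap\Gamma^*$, bound $\sharp(\aar\cap\Gamma^*)$ via Remark~\ref{m.3} and Proposition~\ref{sti.mea}, and divide by the lower bound $\sharp(B_R\cap\Gamma^*)\gtrsim R^d$ from Remark~\ref{m.1}. If anything, your reference to Remark~\ref{m.1} for the denominator is cleaner than the paper's citation of \eqref{stisfer}, which points to the upper bound rather than the needed lower bound.
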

\begin{proof}
By recalling the formula \eqref{Omega R A R}, one has that 
\begin{equation}\label{maradona 0}
\sharp\Big( \Omega \cap B_R \cap \Gamma^* \Big) = \sharp\Big( \Omega^{(R)} \cap \Gamma^* \Big) = \sharp \Big(B_R \cap \Gamma^* \Big) - \sharp \Big( \Omega^{(R, c)}\cap \Gamma^* \Big)\,.
\end{equation}
By Remark \ref{m.3} and Lemma \ref{sti.mea}, one obtains that 
$$
\sharp \Big( \aar\cap \Gamma^* \Big) \lesssim R^{d+\delta-1}
$$
and therefore, using Remark \ref{stisfer} and the formula \eqref{maradona 0} one obtains the claimed estimate \eqref{maradona - 1}. 
\end{proof}
\section{Proof of Theorem \ref{maint}}\label{prova main theorem}
The estimate \eqref{density} follows by Proposition \ref{lemma stima misura finale}.  

We show now that for $\xi\in\Omega$, $e^{\im k\cdot\xi}$ is a
quasimode for $H_n$.
\noindent
By the normal form Theorem \ref{normalform}, for any $n \in \N$, there exists a unitary map ${\cal U}_n \in {\cal B}(H^s)$ for any $s \geq 0$ such that the $H_n = Op^w(h_n) = {\cal U}_n H_0 {\cal U}_n^{- 1}$ satisfies the expansion given in \eqref{symbolnf}, namely 
$$
h_n = h^0 + \langle z_n \rangle + z_n^{(res)}  + v_n 
$$
with $\langle z_n \rangle, z_n^{(res)} \in S^{M - \frak e}_\delta$, $v_n \in S^{M - \frak e - \rho n}_\delta$ and ${\rm supp}(z_n^{(res)}) \subseteq \Omega$. By applying Lemma \ref{Weyl classic link}, one has that there exists $\widetilde z_n^{(res)} \in S^{M - \frak e}_\delta$, $\widetilde v_n \in S^{M - \frak e - \rho n}_\delta$ such that 
\begin{equation}\label{maradona 10}
\begin{aligned}
& Op^w (z_n^{(res)}) = Op^{cl}(\widetilde z_n^{(res)}), \quad Op^w (v_n) = Op^{cl}(\widetilde v_n), \\
& {\rm supp}(\widetilde z_n^{(res)}) = {\rm supp}(z_n^{(res)}) \subseteq \R^d \setminus \Omega, \quad {\rm supp}(v_n ) = {\rm supp}(\widetilde v_n)\,. 
\end{aligned}
\end{equation}
Therefore, given $\xi \in \Omega \cap \Gamma^*$, one gets, by explicit
computation exploiting the definition of $Op^{w}$,
\begin{equation}\label{maradona 11}
\begin{aligned}
Op^w(z_n^{(res)})[e^{\ii x \cdot \xi}] & = Op^{cl}(\widetilde z_n^{(res)})[e^{\ii x \cdot \xi}] = 0 
\end{aligned}
\end{equation}
since $\widetilde z_n^{(res)}(x, \xi) = 0$ for any $\xi \in \Omega$. Moreover, one has that 
\begin{equation}\label{maradona 12}
\begin{aligned}
Op^w(v_n)[e^{\ii x \cdot \xi}] & = Op^{cl}(\widetilde v_n^{(res)})[e^{\ii x \cdot \xi}] =   \widetilde v_n^{(res)}(x, \xi) e^{\ii x \cdot \xi} = O(\langle \xi \rangle^{M - \frak e - \rho n})\,.
\end{aligned}
\end{equation}
Hence \eqref{maradona 10}-\eqref{maradona 12} imply that for any $\xi \in \Omega \cap \Gamma^*$
\begin{equation}\label{bla bla}
H_n [e^{\ii x \cdot \xi}] = \lambda_n(\xi) e^{\ii x \cdot \xi} + O(\langle \xi \rangle^{M - \frak e - \rho n})\,, \quad \lambda_n(\xi) := h^0(\xi) + \langle z_n \rangle(\xi)\,. 
\end{equation}
The existence of one eigenvalue $ O(\langle \xi \rangle^{M - \frak e -
  \rho n})$ close to $ \lambda_n(\xi) $ follows by the standard
quasimode argument. In the case with symmetry, we need to construct
two eigenvalues bifurcating from $ \lambda_n(\xi) $ (note that $
\lambda_n(\xi) $ is even in $\xi$). This situation was studied in
\cite{BKP15}. According to Proposition 5.1, statement (ii) of that
paper, the result
follows from the fact that in such a case
{$\langle e^{\im\xi\cdot x},e^{\im\xi'\cdot x}\rangle_{L^2}=0$}. Of course the same is true
in the case of higher multiplicity. 

We also remark that, defining $\vphi_{n, \xi}:= {\cal U}_n^{-
  1}[e^{\ii x \cdot \xi}]$, it is a quasimode for the original
Hamiltonian. Indeed one has
$$
\begin{aligned}
H_0 \vphi_{n, \xi}  & = {\cal U}_n^{- 1} H_n {\cal U}_n [\vphi_{n,
    \xi} ]=  \lambda_n(\xi) \vphi_{n, \xi} + O(\langle \xi \rangle^{M
  - \frak e - \rho n})\ .
\end{aligned}
$$

\bibliography{biblio}
\bibliographystyle{alpha}

\def\cprime{$'$}

\end{document}